\theoremstyle{thmstyleone}%
\newtheorem{theorem}{Theorem}%
\theoremstyle{thmstyletwo}%
\theoremstyle{thmstylethree}%
\newenvironment{manualcorollary}[1]{%
  \manualcorollaryinner
}{\endmanualcorollaryinner}
\newenvironment{manuallemma}[1]{%
  \manuallemmainner
}{\endmanuallemmainner}
\newenvironment{manualremark}[1]{%
  \manualremarkinner
}{\endmanualremarkinner}
\newcommand{\bx}{\bm{x}}
\newcommand{\F}{\mathcal{F}}
\newcommand{\G}{\mathcal{G}}
\newcommand{\U}{\mathcal{U}}
\newcommand{\calH}{\mathcal{H}}
\newcommand{\calP}{\mathcal{P}}
\newcommand{\Gate}[1]{\textsc{#1}}
\newcommand{\zgate}{\Gate{z}}
\newcommand{\ygate}{\Gate{y}}
\newcommand{\rygate}{\Gate{Ry}}
\newcommand{\xgate}{\Gate{x}}
\newcommand{\tgate}{\Gate{t}}
\newcommand{\idgate}{\Gate{I}}
\newcommand{\cnotgate}{\Gate{cnot}}
\newcommand{\vast}{\bBigg@{4}}
\newcommand{\Vast}{\bBigg@{5}}
\newcommand*{\centerfloat}{%
  \parindent \z@
  \leftskip \z@ \@plus 1fil \@minus \textwidth
  \rightskip\leftskip
  \parfillskip \z@skip}
\def \arxiv{z}  %
\newcommand{\alt}[2]{\ifx\arxiv\undefined{#1}\else{#2}\fi}  %
\alt{\unnumbered}{}  %
\begin{document}

\title[Constrained Optimization via Quantum Zeno Dynamics]{Constrained Optimization via Quantum Zeno Dynamics}

\author*[1]{\fnm{Dylan} \sur{Herman}}\email{dylan.a.herman@jpmchase.com}
\equalcont{These authors contributed equally to this work.}

\author[1]{\fnm{Ruslan} \sur{Shaydulin}}
\equalcont{These authors contributed equally to this work.}

\author[1]{\fnm{Yue} \sur{Sun}}
\equalcont{These authors contributed equally to this work.}

\author[1]{\fnm{Shouvanik} \sur{Chakrabarti}}
\author[1]{\fnm{Shaohan} \sur{Hu}}
\author[1]{\fnm{Pierre} \sur{Minssen}}
\author[1]{\fnm{Arthur} \sur{Rattew}}
\author[1]{\fnm{Romina} \sur{Yalovetzky}}
\author[1]{and~\fnm{Marco} \sur{Pistoia}}

\affil[1]{\orgdiv{Global Technology Applied Research}, \orgname{JPMorgan Chase}, \orgaddress{\city{New York}, \state{NY}, \postcode{10017}, \country{USA}}}

\abstract{
Constrained optimization problems are ubiquitous in science and industry. Quantum algorithms have shown promise in solving optimization problems, yet none of the current algorithms can effectively handle arbitrary constraints.
We introduce a technique that uses quantum Zeno dynamics to solve optimization problems with multiple arbitrary constraints, including inequalities. We show that the dynamics of quantum optimization can be efficiently restricted to the in-constraint subspace on a fault-tolerant quantum computer via repeated projective measurements, requiring only a small number of auxiliary qubits and no post-selection. 
Our technique has broad applicability, which we demonstrate by incorporating it into the quantum approximate optimization algorithm (QAOA) and variational quantum circuits for optimization. We evaluate our method numerically on portfolio optimization problems with multiple realistic constraints and observe better solution quality and higher in-constraint probability than state-of-the-art techniques. We implement a proof-of-concept demonstration of our method on the Quantinuum H1-2 quantum processor. }

\maketitle

\section{Introduction}
Combinatorial optimization is widely considered to be one of the most promising problem domains for quantum algorithms. The ubiquity of hard optimization problems in science and industry amplifies the impact of any improvements in algorithmic performance. In practice, the optimization problems often have many constraints, such as the regulatory constraints when optimizing a portfolio or logistic constraints when optimizing flight crew assignments. Being able to incorporate a diverse range of constraints is an essential criterion for the applicability of a quantum algorithm to industrial problems.

A commonly considered class of quantum optimization algorithms uses a parameterized quantum evolution to drive the quantum system towards a state encoding the solution of the optimization problem. This class of algorithms includes the quantum approximate optimization algorithm (QAOA)~\cite{Hogg2000,farhi2014quantum} and variational algorithms for optimization~\cite{cerezo2021variational, rattew2019}. {While these algorithms are often discussed as promising approaches for noisy near-term devices~\cite{kandala2017hardware,2303.02064}, many results supporting their potential are analytically derived or numerically demonstrated in the fault-tolerant regime~\cite{2205.12481,2208.06909,qaoa-labs}. Therefore, in this paper we primarily view these algorithms as targeting fault-tolerant quantum processors.}

One of the main challenges in applying these quantum algorithms to commercially-relevant optimization problems is the need to enforce the constraints. Concretely, the goal is to prepare a quantum state such that upon measuring it, a high-quality solution that satisfies the constraints is obtained with high probability. Two commonly considered approaches are to encode the constraint into the objective using a penalty term and to directly restrict the parameterized quantum evolution to the in-constraint subspace. In the first approach, a penalty term is added to the objective so that optimizing the objective requires satisfying the constraint. While such approaches are flexible enough to satisfy most constraints, the quality of the result is sensitive to the choice of the penalty strength~\cite{Wang2020}. As tuning the penalty strength is difficult in general, this approach often leads to sub-optimal performance in practice~\cite{Niroula2022}. This observation motivates the second approach, i.e., restricting the quantum evolution to the in-constraint subspace.

A number of techniques have been proposed to ensure that the parameterized quantum evolution respects the constraints of the problem. Hadfield et al.~\cite{hadfield2018quantum, Hadfield_2019} proposed the quantum alternating operator ansatz algorithm, which applies pairs of alternating operators to an in-constraint initial state. The first alternating operator (\emph{phase operator}) is diagonal in the computational basis and encodes the objective, and the second operator (\emph{mixing operator} or \emph{mixer}) is non-diagonal and restricts the transitions of probability amplitudes to the computational basis states corresponding to the in-constraint solutions. The problem of constructing a Hamiltonian preserving arbitrary constraints is $\mathsf{NP}$-complete even for linear constraints~\cite{Leipold2021}, though explicit constructions are available for some combinatorial optimization problems~\cite{hadfield2018quantum, Hadfield_2019,Stollenwerk2020,Hen2016}. In general, constraint-preserving mixers are difficult to implement, even when constructions are available~\cite{Cook2020,fuchs2022constrained}. The cost of implementing the algorithm on hardware can be reduced for a restricted class of problems by combining the phase and mixing operators~\cite{larose2021mixer}. If a uniform superposition of in-constraint states can be prepared efficiently, a Grover operator can be used as the mixer~\cite{bartschi2020grover, Gilliam_2021,Golden2021}. Finally, for problems with an indexable set of feasible states (such as those with Hamming-weight constraints), a continuous-time quantum walk in the solution space can be used as a mixer~\cite{Marsh2019AQW, Marsh_2020, Slate2021quantumwalkbased}. However, none of these techniques are sufficiently flexible to handle the general case of multiple arbitrary constraints directly. { The parity optimization framework \cite{paritycompiler_2021, parityconstraints_2021, paritybenchmarks_2021, modularparity_2022, Dominguez2023} can natively handle polynomial equality constraints for QAOA-like circuits. However, this framework introduces an auxiliary qubit for every unique monomial term that appears, leading to large space overhead for complex objectives and constraints.} All of the techniques mentioned above consider QAOA-like alternating operator circuits, and are not easy to generalize to other variational algorithms. 

In this work, we introduce an approach for enforcing multiple arbitrary constraints in quantum optimization. We restrict the quantum evolution to the in-constraint subspace by repeated projective measurements. In each measurement, the value of the constraint is computed onto an auxiliary register, which is then measured. Our technique uses quantum Zeno dynamics, wherein the evolution of the system is restricted to the subspace defined by the repeated projective measurements and transitions outside of this subspace are suppressed. Our approach is applicable to any problem in $\mathsf{NPO}$ (the $\mathsf{NP}$ optimization complexity class), as the only restriction we impose on the constraints is the existence of an efficient oracle for testing them. We provide explicit constructions for arbitrary combinatorial constraints. 
We demonstrate the effectiveness of the proposed technique by using it to enforce constraints in QAOA with various, unconstrained, mixing operators and the layer variational quantum eigensolver (L-VQE)~\cite{Liu2022}, which is a variational quantum algorithm for optimization. We show analytically that our technique is guaranteed to obtain the optimal in-constraint solution when applied to the digital simulation of the quantum adiabatic algorithm, or equivalently to QAOA in the constrained subspace with sufficiently large depth. We derive an analytical form of the scaling of the number of measurements required to maintain a constant minimum success probability for any {parameterized quantum evolution}. 
Furthermore, we provide numerical evidence that our technique, applied to QAOA for the portfolio optimization problem with a budget constraint, provides significant performance improvements over the state-of-the-art method of enforcing the constraint by introducing a penalty term. 
While the results we derive are for fault-tolerant quantum processors, high-fidelity near-term devices may be able to implement the algorithms without realizing full error-correction. To demonstrate an end-to-end realization of our technique, we implement QAOA with Zeno dynamics on the Quantinuum H1-2 trapped-ion quantum processor {for proof-of-concept portfolio optimization problems. These experiments complement our numerical simulations by using explicit constructions and compilations of circuits, including those for checking the constraints. In the hardware experiments, we} observe performance improvements from increasing the number of measurements, up to a two-qubit circuit depth of 148.

\section{Results}

\subsection{Quantum Zeno dynamics for constrained optimization}%

We now introduce our approach to enforcing constraints in quantum optimization by repeated non-selective projective measurements. Our method is general, though here we focus on algorithms utilizing parameterized states of the form 
\begin{equation}
    \label{eq:results_ansatz_vqe}
    \ket{\psi(\boldsymbol{\theta})} = U(\boldsymbol{\theta})\ket{s} = \prod_{j=1}^{m} e^{-i\theta_jH_j} \ket{s},
\end{equation}
where $H_j$ is some Hamiltonian, e.g., a tensor product of single-qubit Pauli operators, and $\ket{s}$ is the initial state, which lies in the system Hilbert space $\mathcal{H}$. 

A constrained combinatorial optimization problem has a set of feasible states $\mathcal{F}$, which is a subset of the $n$-dimensional Boolean cube $\mathbb{B}^n$.
Let $P_\F$ denote the orthogonal projector onto the subspace spanned by computational basis states corresponding to feasible solutions in $\F$. 
We discuss the construction of this operator in \alt{the \emph{Methods} Section}{Section~\ref{sec:implement_oracle}}. 
The measurement $\calP$ is a super-operator as defined as
\begin{equation}
    \label{eq:results_measurement}
    \calP \rho = \sum_{j=1}^k P_j \rho P_j,
\end{equation}
where $\sum_{j=1}^k P_j = \idgate$, and $P_j$ is a projection onto some subspace $\calH_{j} = P_j\calH$ of dimensionality $\Tr(P_j) \ge 1$.
Without loss of generality, we can assume $P_1 = P_\F$, and define $P_\G := \idgate - P_\F = \sum_{j=2}^k P_j$.

We give our main result in Theorem~\ref{thm:main_theorem}, which we use to derive the number of measurements required to enforce constraints {in parameterized evolutions of the form given by Equation~\eqref{eq:results_ansatz_vqe}.}

\begin{theorem}\label{thm:main_theorem}
Let $\calP$ be the measurement defined in Equation~\eqref{eq:results_measurement}. Suppose a system is evolved from some initial state $\rho_0 = P_j\rho_0P_j$ under the action of a Hamiltonian $H$, whose distinct eigenvalues are $\xi_{\min} =\xi_{1} < \xi_{2} < \cdots < \xi_{d}=\xi_{\max}$, for time $\theta$ . For {$\delta \leq 0.19$}, if $N$ applications of $\calP$ are performed at equally-spaced time intervals with
\begin{equation}
    \label{eqn:gen_meas_scaling}
    N = \left\lceil\frac{\left[\theta( \xi_{\max} - \xi_{\min})\right]^2}{\ln{\left({1-2\delta}\right)^{-2}}}\right\rceil,
\end{equation}
then the probability of measuring a state in $\mathcal{H}_j$ at time $\theta$ is lower bounded by {$1 - \delta$}, i.e.,
\begin{equation}
    \Tr\left[P_j\rho(\theta)\right] \geq  {1 - \delta},
\end{equation}
where
\begin{equation}
    \rho(\theta) = \U(\theta)\rho_0\U(\theta)^\dagger, \quad \U(\theta) = [\calP e^{-iH \, \theta/N}]^{N}.
\end{equation}
\end{theorem}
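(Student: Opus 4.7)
The plan is to reduce the claim to the survival probability of a single pure state, lower bound that by the probability of the ``all $j$'' measurement trajectory, and iterate a spectral per-step estimate.

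First, since $\rho_0 = P_j\rho_0 P_j$, I would spectrally decompose $\rho_0 = \sum_i q_i \ket{\psi_i}\bra{\psi_i}$ with each $\ket{\psi_i}\in\mathcal{H}_j$ and use convexity of $\Tr[P_j\,\cdot\,]$, reducing the problem to bounding $\Tr[P_j\,\U(\theta)\ket{\psi}\bra{\psi}\U(\theta)^\dagger]\ge 1-\delta$ for an arbitrary unit $\ket{\psi}\in\mathcal{H}_j$. Next, I unfold the non-selective channel $\calP$: iterating $\calP\rho = \sum_\ell P_\ell\rho P_\ell$ expresses $\rho(\theta)$ as a sum of non-negative operators indexed by measurement trajectories $(\ell_1,\ldots,\ell_N)$. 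Dropping every trajectory except the constant one $\ell_1=\cdots=\ell_N=j$ gives the lower bound
\begin{equation}
\Tr[P_j\rho(\theta)] \;\ge\; \|W^N\ket{\psi}\|^2,\qquad W \;:=\; P_j\,e^{-iH\theta/N}\,P_j,
\end{equation}
so it suffices to prove $\|W^N\ket{\psi}\|^2\ge 1-\delta$ for the stated $N$.

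For the per-step estimate I would use Cauchy--Schwarz together with a spectral-shift argument. For any unit $\ket{\phi}\in\mathcal{H}_j$, $P_j\ket{\phi}=\ket{\phi}$, so $\|W\ket{\phi}\|^2 \ge |\bra{\phi}W\ket{\phi}|^2 = |\bra{\phi}e^{-iH\theta/N}\ket{\phi}|^2$. Expanding $\ket{\phi}$ in the eigenbasis of $H$ gives $\bra{\phi}e^{-iH\theta/N}\ket{\phi} = \sum_k |c_k|^2 e^{-iE_k\theta/N}$, a convex combination of unit-circle points lying on an arc of angular length $(\xi_{\max}-\xi_{\min})\theta/N$. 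The worst-case magnitude of such a convex combination (attained by concentrating mass at the two arc endpoints) is $\cos((\xi_{\max}-\xi_{\min})\theta/(2N))$; equivalently one can subtract the optimal global phase $e^{-i\bar{\xi}\theta/N}$ with $\bar{\xi}=(\xi_{\max}+\xi_{\min})/2$ and estimate $|e^{-i\tilde E_k\theta/N}-1|$ via $\sin$. Either way one obtains the single-step contraction $\|W\ket{\phi}\|\ge \cos\bigl(K/(2N)\bigr)\,\|\ket{\phi}\|$, where $K := \theta(\xi_{\max}-\xi_{\min})$.

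Because $W$ maps $\mathcal{H}_j$ to itself and the per-step bound holds uniformly over unit vectors in $\mathcal{H}_j$, I would apply it to the normalized iterates and chain, yielding
\begin{equation}
\|W^N\ket{\psi}\|^2 \;\ge\; \cos^{2N}\!\bigl(K/(2N)\bigr).
\end{equation}
The final step is to invert this estimate in closed form: using $\cos^2 x\ge 1-x^2$ (equivalently $\sin x\le x$) and elementary manipulations of $\log$, one rearranges $\cos^{2N}(K/(2N))\ge 1-\delta$ into $N\ge K^2/\ln((1-2\delta)^{-2})$, giving the stated ceiling.

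The main obstacle I anticipate is the precise inversion step: the bound $\cos^{2N}(K/(2N))\ge 1-\delta$ does not invert to an exact closed form, and producing the specific denominator $\ln((1-2\delta)^{-2})$ requires choosing the intermediate inequalities (a Bernoulli/log estimate plus a bound on $\sin$) carefully. The regime $\delta\le 0.19$ is what keeps those inequalities simultaneously valid; outside it, the elementary chain loosens and the clean closed form fails. A secondary technical point is justifying that the pure-state argument lifts to arbitrary mixed $\rho_0=P_j\rho_0 P_j$, which follows from the initial convexity reduction but should be recorded explicitly.
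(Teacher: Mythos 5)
Your reduction to pure states, the trajectory unfolding, and the per-step estimate $\lVert P_j e^{-iH\theta/N}\ket{\phi}\rVert \ge \cos\bigl(K/(2N)\bigr)$ (with $K=\theta(\xi_{\max}-\xi_{\min})$) are all sound and essentially coincide with the paper's Lemma~\ref{lem:scaling_lemma}. The genuine gap is in the final combination step: by discarding every measurement trajectory except the all-$j$ one, your lower bound is $\cos^{2N}\!\bigl(K/(2N)\bigr)$, and no choice of intermediate inequalities can invert this to the stated $N=\bigl\lceil K^2/\ln((1-2\delta)^{-2})\bigr\rceil$. Indeed, since $\cos x\le e^{-x^2/2}$ on the relevant range, your bound is at most $e^{-K^2/(4N)}$, which for $N\approx K^2/\ln((1-2\delta)^{-2})$ tends to $\sqrt{1-2\delta}$, and $\sqrt{1-2\delta}<1-\delta$ strictly for every $\delta\in(0,\tfrac12)$ (at $\delta=0.19$ it gives about $0.79$ versus the claimed $0.81$). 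Equivalently, closing your chain would require the strictly larger $N\ge K^2/\ln((1-\delta)^{-4})$, and since $(1-\delta)^2>1-2\delta$ this never matches the theorem's constant; the issue is not a matter of tightening Bernoulli/log estimates but a factor of~$2$ lost in the exponent.

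The paper recovers exactly this factor by \emph{not} dropping the out-and-back trajectories: because $\calP$ is non-selective, probability that leaks out of $\mathcal{H}_j$ can return. Its Lemma~\ref{cor:prob_lower_bound_tight} models the in/out sequence as a two-state Markov chain whose stay probability per step is at least $p^*=\cos^2\bigl(K/(2N)\bigr)$, giving the survival probability
\begin{equation}
\frac{1}{2}+\frac{1}{2}\bigl[2p^*-1\bigr]^{N}\;\ge\;\frac{1}{2}+\frac{1}{2}\exp\!\left(-\frac{K^2}{2N}\right),
\end{equation}
which is larger than your $e^{-K^2/(4N)}$ and is tight for the worst-case two-dimensional instance. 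Setting $\tfrac12+\tfrac12 e^{-K^2/(2N)}\ge 1-\delta$ is what produces the $1-2\delta$ inside the logarithm and hence the stated $N$; the $\delta\le 0.19$ condition then keeps the accompanying domain restriction $K<N$ consistent. To repair your argument you must either adopt this Markov-chain (or an equivalent return-trajectory) analysis, or weaken the theorem's constant to $\ln((1-\delta)^{-4})$ in the denominator.
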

\begin{proof}
See \alt{the \emph{Methods} Section}{Section~\ref{sec:proof_of_main_thm}}.
\end{proof}
\begin{manualremark}{1}
\label{remark_1}
Note that since $2\lVert H \rVert_{2} \geq \lvert \xi_{\max} - \xi_{\min} \rvert$, the bound can be reformulated in terms of the spectral norm of the Hamiltonian. This may be useful as the spectral norm may be easier to bound in practice for complicated Hamiltonians.
\end{manualremark}

Assume that the initial state $\ket{s}$ respects the constraints, that is $P_\F\ket{s}=\ket{s}$. 
We apply a parameterized unitary $U(\boldsymbol{\theta})$ to the initial state following~Equation~\eqref{eq:results_ansatz_vqe}. 
{To enforce the constraints, we can insert measurements into the parameterized evolution as follows:}
{
\begin{align}
\label{eqn:ansatz_vqe_with_meas}
    \U_Z(\bm{\theta}) = \prod_{k=1}^{L}\left[\calP \prod_{j=1}^{m_{k}} e^{-i(\theta_{r(k,j)}/N_{k})H_{r(k,j)}} \ \right]^{N_{k}},
\end{align}
where $r(k, j) = \sum_{t=1}^{k-1}m_{t} + j$ and each sequence of $m_{k}$ parameterized evolutions, without a measurement, is called a \emph{block}. We define $N_{k} = 0$ to mean that no measurement is performed and  no $\theta_{r(k, j)}$ is not scaled for that block. The following corollarly provides a sufficient $N_{k}$ for each block to ensure a desired minimum in-constraint probability. The asymptotic dynamics, i.e. when $N_{k} \xrightarrow[]{} \infty, \;\forall k$ and also called the Zeno limit, will be different depending on how the blocks are chosen.
}
{
\begin{manualcorollary}{1}
\label{cor:scheme_2_cor}
Let $\calP$ be the measurement defined in Equation~\eqref{eq:results_ansatz_vqe}. Let the parameterized evolution defined in Equation~\eqref{eqn:ansatz_vqe_with_meas} evolve the system from some initial state $\rho_0 = P_j\rho_0P_j$. Then, in order to ensure that
$$\Tr[P_{j}\U_Z(\bm{\theta})\rho_{0}\U_Z(\bm{\theta})^{\dagger}] \geq 1 - \delta,$$
it suffices to choose
\begin{align}
N_k = \left\lceil\frac{4L[\sum_{j=1}^{m_k}\lvert\theta_{r(k,j)}\rvert]^2\max_{j}\lVert H_{r(k,j)}\rVert_{2}^{2}}{\tau(\delta)}\right\rceil,
\end{align}
where
\begin{itemize}
    \item $\tau(\delta) = \ln(1-2\delta)^{-2}$ if $H_{r(k,j)}$ pairwise commute,
    \item $\tau(\delta) =\ln{\left(1-\delta\right)^{-1.78}}$ otherwise,
\end{itemize}
and $\delta \leq 0.19$. In addition, the asymptotic dynamics is
\begin{align}
 \prod_{k=1}^{L}e^{-i \calP\bm{H}_{k}\cdot\bm{\theta}_{k}}\calP,
\end{align}
where $\calP$ acts element-wise on the vector $\bm{H}_{k} = (H_{(k,1)}, \dots , H_{(k,m_{k})})^{\mathsf{T}}$ and $\bm{\theta}_{k} = (\theta_{(k,1)}, \dots , \theta_{(k,m_{k})})$.
\end{manualcorollary}
}
\begin{proof}
See \alt{the \emph{Methods} Section}{Section~\ref{sec:proof_of_cor1}}.
\end{proof}
\begin{manualremark}{2}
For combinatorial optimization problems, constraint-preserving measurements that correspond to different constraints always commute. Thus $\mathcal{P}_{\mathcal{F}}$ can be implemented as a composition of measurements corresponding to different  constraints.
\end{manualremark}

While the previous results indicate that $N_k$ can grow inverse polynomially with the desired error probability, the following result (Corollary \ref{cor:efficient}) shows that fixing $\delta$ and applying a simple repetition scheme suffices to suppress the failure probability arbitrarily below $\delta$ with only logarithmic overhead. Thus, the overall procedure can be made efficient. The purpose of the Zeno framework is to ensure that we can obtain a state that has an overlap with $\mathcal{H}_j$ that is lower bounded by a constant and prepare this state with an overhead that is $O(\text{polylog}(\dim \mathcal{H}))$. 
{
\begin{manualcorollary}{2}
\label{cor:efficient}
Let $\calP$ be the measurement defined in Equation~\eqref{eq:results_measurement}. Let the parameterized evolution defined in Equation~\eqref{eqn:ansatz_vqe_with_meas} evolve the system from some initial state $\rho_0 = P_j\rho_0P_j$. In addition, suppose that the number of measurements $N_k$ was chosen, using Corollary \ref{cor:scheme_2_cor}, to ensure that 
$\Tr[P_{j}\rho_{Z}(\bm{\theta})] =\Tr[P_{j}\U_Z(\bm{\theta})\rho_{0}\U_Z(\bm{\theta})^{\dagger}]$
is lower bounded by a constant independent of the system size, and then in order to ensure that $\mathcal{P}$ applied to $\rho_{Z}(\bm{\theta})$ prepares a state in $\mathcal{H}_j$ with a probability at least $1 - \epsilon$, it suffices to
prepare and measure at most  $\log(1/\epsilon)$ copies of $\rho_{Z}(\bm{\theta})$.
\end{manualcorollary}
\begin{proof}
Suppose $\Tr[P_{j}\rho_{Z}(\bm{\theta})] = c$. Since we can efficiently check whether the post-measurement state obtained from applying $\mathcal{P}$ to $\rho_{Z}(\bm{\theta})$ is in $\mathcal{H}_j$, $\log(1/\epsilon)/\log(1/(1-c)) < \log(1/\epsilon)$ repetitions suffice to ensure that the outcome of at least one of the repetitions is in $\mathcal{H}_j$ with probability at least $1 - \epsilon$.
\end{proof}
}

{These results imply that for most practical cases, e.g. when $H_j$ are Pauli operators as in the cases of QAOA and hardware-efficient parameterized circuits, the number of measurements scales at most quadratically in the circuit depth and width, i.e., as $O(\text{polylog}(\dim \mathcal{H}))$. Thus, QZD can be used to efficiently constrain parameterized evolution for quantum optimization.}

\subsection{Constrained QAOA via Zeno dynamics}

We now discuss the application of QZD to QAOA. In a QAOA circuit, the phase operator $U_C(\gamma)$ is diagonal in the computational basis and cannot violate constraints. More specifically, it evolves the current state, for time $\gamma$, under the diagonal operator  $C=\sum_{\bm{x}\in\mathbb{B}^n}f(\bm{x})\ketbra{\bm{x}}$, which encodes the values of the objective function $f$ on $\mathbb{B}^n$.  The Hermitian mixing operator $B$ transitions probability amplitude between elements of  $\mathbb{B}^{n}$ and, in general, does not respect the problem constraints. Therefore the measurements only need to be added to the mixing operator. Since a $p$-layer QAOA circuit consists of $p$  applications of the phase and mixing operators in an alternating fashion, the full circuit combined with the Zeno framework then becomes
\begin{equation}\label{eq:qzd_qaoa_circuit}
    \U_{Z\text{-QAOA}}(\boldsymbol{\beta}, \boldsymbol{\gamma}) = \prod_{j=1}^{p}\left[\U_B(\beta_j, N_j)U_C(\gamma_j)\right],
\end{equation}
where
\begin{equation}
    \U_B(\beta_j, N_j) = \left[\calP e^{-i\frac{\beta_j}{N_j}B}\right]^{N_j}.
\end{equation}
In the notation of Equation \eqref{eqn:ansatz_vqe_with_meas}, this corresponds to setting all $m_{k} = 1$, and setting $N_k = 0$ for blocks containing the cost operator. While there are other valid choices for the blocks, the decomposition we have chosen is sufficient to achieve an efficient scheme.

As the mixing operator $B$ is known, we can explicitly derive  the number of measurements required to maintain a constant success probability. We observe that for any mixer this number of measurements grows linearly with the number of QAOA layers, and for commonly considered mixers, the number of measurements grows no more than quadratically with the number of qubits.

\begin{manualcorollary}{3}\label{cor:qaoa_bounds}
Let $\U_{Z\text{-QAOA}}(\boldsymbol{\beta}, \boldsymbol{\gamma})$ denote the QAOA circuit on $n$ qubits with $N$ measurements added to each mixing operator as defined in Equation~\eqref{eq:qzd_qaoa_circuit}. Let the initial state $\rho_0 = \ketbra{s}$ be in-constraint. Then $N_j$ measurements suffice to maintain at least a $1-\delta$ probability of obtaining an in-constraint measurement outcome, where
\begin{itemize}
    \item if $B=\sum_{k=1}^n\xgate_k$, then $N_j=\left\lceil\frac{p\beta_j^{2}n^2 }{\ln{\left[{1-2\delta}\right]^{-\frac{1}{2}}}}\right\rceil$ 
    \item if $B=\ketbra{+}$, then $N_j=\left\lceil\frac{p\beta_j^{2}}{\ln{\left[{1-2\delta}\right]^{-2}}}\right\rceil$,
\end{itemize}
and $\delta \leq 0.19$.
\end{manualcorollary}
\begin{proof}
The proof follows from Theorem \ref{thm:main_theorem} by noting that for $B=\sum_{k=1}^n\xgate_k$ the minimum and maximum eigenvalues are $-n$ and $n$, respectively, and for $B=\ketbra{+}$ the only eigenvalues are one and zero. For QAOA with $p$ layers, the number of measurements increases by a factor of $p$. Note that while we could  of instead used Corollary \ref{cor:scheme_2_cor}, using Theorem \ref{thm:main_theorem} directly results in $N_k$ being lower by a constant for $B=\ketbra{+}$.
\end{proof}
Note that the scaling rule of Corollary~\ref{cor:qaoa_bounds} implies that the number of measurements will change with $\beta_j$ and thus each mixer layer.

Figure~\ref{fig:meas_scale} visualizes how the number of measurements required to maintain a given minimum in-constraint probability, according to Corollary~\ref{cor:qaoa_bounds}, grows with the evolution time $\beta$ for the $B=\sum_j\xgate_j$ (\ding{54} marker) and $B=\ketbra{+}$ (\ding{58} marker) mixing operators for $p=1$ QAOA with a $3$-qubit initial state $\ket{s}$. As the phase operator is diagonal, there is no dependency on it. We note that the number of measurements for the mixer $B=\sum_j\xgate_j$ grows with number of qubits and is therefore larger than for $B=\ketbra{+}$. Note that when following the scaling rules of Corollary~\ref{cor:qaoa_bounds}, the number of measurements is multiplied by the number of QAOA layers $p$.

In \alt{the \emph{Results} Section}{Section~\ref{sec:numerical_experiments}}, we observe that for realistic constraints, the number of measurements is significantly lower. This is because the worst-case $P_\F$ and $\ket{s}$, i.e., from Equation~\eqref{eqn:worst_case_state} in the proof of Lemma \ref{lem:scaling_lemma}, are far from those encountered in practice. Specifically, the worst-case $P_\F$ is rank one (i.e., only one state is in-constraint). A larger in-constraint subspace leads to a lower sufficient number of measurements. Moreover, in practice the initial state is unlikely to align perfectly with the worst case presented in Equation~\eqref{eqn:worst_case_state}. 
We also observe in our experiments that the required number of measurements has only a weak dependence on the number of QAOA layers $p$ for the problem instances considered.
Therefore, one could consider a significantly relaxed and simplified version of the rules provided in Corollary~\ref{cor:qaoa_bounds} as follows:
\begin{align}
    \label{eqn:eta_defn}
    N_j = \left\lceil\frac{\beta_j^2}{\eta}\right\rceil,
\end{align}
where $\eta$ is some hyperparameter to be fine tuned. One could always efficiently estimate the in-constraint probability of a QAOA circuit with a fixed $\eta$ by measuring a single auxiliary qubit indicating whether the final state output by the circuit is in-constraint. In the portfolio optimization experiments, we successfully use an $\eta$ for the $B=\sum_j\xgate_j$ mixer that is orders of magnitude larger than predicted by Corollary~\ref{cor:qaoa_bounds}, requiring a correspondingly smaller number of measurements.

\begin{figure}[t]
    \centering
    \vspace{-0.09in}
    \includegraphics{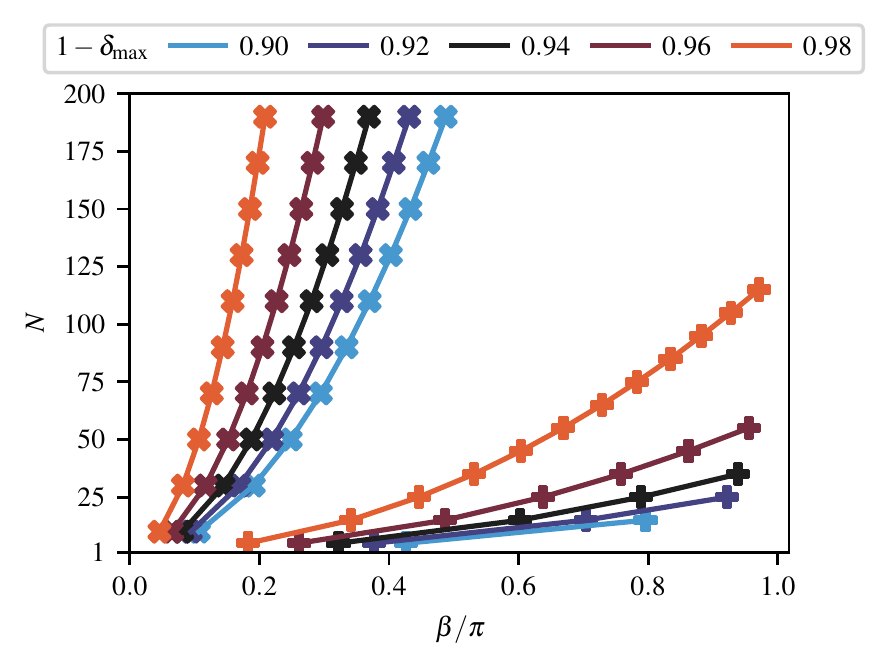}
    \caption{\textbf{Scaling of the number of Zeno measurements.} 
    Number of measurements, obtained from Corollary~\ref{cor:qaoa_bounds}, required in QAOA with Zeno dynamics to maintain a maximum out-of-constraint probability of $\delta_{\max}$ (hence, a minimum  in-constraint probability of  $1 -\delta_{\max}$) for the $B=\sum_j\xgate_j$ (\ding{54} marker) and $B=\ketbra{+}$ (\ding{58} marker) mixers with $3$ qubits. Color denotes the minimum in-constraint probability $1 -\delta_{\max}$, as indicated by the legend. Note that this is the scaling required to ensure the desired minimum in-constraint probability for the worst-case initial state (i.e., Equation~\eqref{eqn:worst_case_state}) and is potentially more pessimistic than what is observed in practice. Many more measurements are required for $B=\sum_j\xgate_j$ as the number of measurements grows quadratically with number of qubits. Note that due to periodicity, the evolution time, $\beta$, can be constrained to $\lvert\beta\rvert \leq \frac{\pi}{2}$ for $B=\sum_j\xgate_j$ and $\lvert \beta \rvert \leq \pi$ for $B=\ketbra{+}$.}
    \label{fig:meas_scale}
\end{figure}

\subsubsection{QAOA with Zeno dynamics in the adiabatic limit}

If the initial state $\ket{s}$ is the ground state of the mixer Hamiltonian $B$, QAOA is known to be able to prepare the ground state of the cost Hamiltonian $C$ and thereby solve the problem exactly in the limit of an infinite number of QAOA layers by approximating adiabatic evolution~\cite{farhi2014quantum}. We now show that this limiting behavior is preserved for constrained QAOA with Zeno dynamics.

Now consider QAOA with constraints enforced by measurement $\calP$ as defined in Equation~\eqref{eq:results_measurement}, in the Zeno limit, when the number of measurements is taken to infinity, the operator describing the asymptotic dynamics is a sum of the original mixer $B$ projected onto the subspaces defined by the projectors constituting $\calP$, i.e., 
\[
    H_{Z} = \calP B = \sum_{j=1}^k P_j B P_j.
\]

Concretely, consider the task of using QAOA to approximate the adiabatic evolution under the following time-dependent Hamiltonian:
\begin{equation}
    H_{s}(t) = (1-s(t))B + s(t)C,
\end{equation}
where $s:[0, T] \xrightarrow[]{}[0, 1]$ is the interpolating schedule function. A common schedule function is the linear schedule defined by
\begin{equation}
    s(t) = \frac{t}{T},
\end{equation}
where $T$ is the evolution time scale. Suppose $T \gg O((\min_{s}\Delta_{n}(s))^{-2})$, where $\Delta_n(s)$ is the instantaneous minimum difference between the $n$-th eigenvalue and any other eigenvalue of $H(s)$. If $\forall s$, it holds that $\Delta_{n}(s) \neq 0$, then the quantum adiabatic theorem \cite{childs2017lecture} implies:
\begin{equation}
    \mathcal{T}\exp(i\int_{0}^{T}H_{s}(t)dt)\ket{\phi_{n}(0)} = \ket{\phi_n(T)}.
\end{equation}

In the Zeno case, we consider 
\begin{equation}
     H_{s}(t) = (1-s(t))H_{Z} + s(t)\mathcal{P}C.
\end{equation}
Consider the QAOA operator with only one measurement per layer, i.e., $\forall j, N_j=1$ in \eqref{eq:qzd_qaoa_circuit}:
\begin{equation}
    \U(p) = \prod_{j=1}^{p}\mathcal{P}\mathcal{U}_{B}\left(\beta_j\right)\mathcal{U}_{C}\left(\gamma_j\right).
\end{equation}

Now it is easy to recover the parameters $\beta_j$, $\gamma_j$ giving the limit. From the definition of the product integral \cite{dollard_friedman_1984} it follows that
\begin{align}
    &\mathcal{T}\exp(i\int_{0}^{T}H_{s}(t)dt) \nonumber \\ 
    =& \lim_{p \to \infty} \prod_{j=1}^{p}\exp(i\frac{T}{p}H_{s}\left(\frac{jT}{p}\right)) \nonumber \\
    =& \lim_{p \to \infty} \prod_{j=1}^{p}\exp(i\frac{T}{p}\left[\left(1 - \frac{j}{p}\right)\mathcal{P}B + \left(\frac{j}{p}\right)\mathcal{P}C\right]) \nonumber \\
    =& \lim_{p \to \infty} \prod_{j=1}^{p}\mathcal{P}\exp(i\frac{T}{p}\left(1-\frac{j}{p}\right)B)\exp(i\frac{jT}{p^2}C),
\end{align}
where the third equality follows from expanding to the first order in $\frac{T}{p}$ and that $\frac{j}{p}$ and $1 - \frac{j}{p}$ are bounded by $1$. Also, since the evolution is in a finite-dimensional space, $B$ and $C$ have bounded operator norms.

Thus if $\rho_n(0) = \ketbra{\psi_n(0)}$ is an $n$-th eigenstate of $H_{Z}$ then
\begin{equation}
    \rho_n(T) = \lim_{p \to \infty}\U(p)\rho_n(0),
\end{equation}
where $\rho_n(T)$ is pure and is an $n$-th eigenstate of $\mathcal{P}C$. Thus with $\beta_j = -\frac{T}{p}\left(1-\frac{j}{p}\right)$ and $\gamma_j = -\frac{jT}{p^2}$ as $p \xrightarrow[]{} \infty$, QAOA with Zeno dynamics approaches the adiabatic limit and recovers the optimal solution.

\subsubsection{Mitigating mixer limitations in the Zeno limit} \label{sec:mixer_limitations}

While the evolution under $P_{\mathcal{F}}BP_{\mathcal{F}}$ is guaranteed to preserve the in-constraint subspace, it may inhibit transitions between states in $\mathcal{F}$ that were allowed with $B$. This is because states in $\mathcal{F}$ may be connected by $B$ through a path that passes through states not in $\mathcal{F}$. To see this, consider a simple example of the two-qubit mixer $B_{2} = \xgate_{1} + \xgate_{2}$ and the in-constraint space $\mathcal{F} = \{\ket{01}, \ket{10}\}$.
In the Zeno limit, the mixing operator evolution in the in-constraint subspace is generated by $P_{\mathcal{F}}B_{2}P_{\mathcal{F}}$, which equals the zero matrix. Thus, the propagator corresponding to the projected mixer becomes the identity operator and the dynamics become trivial. In general, if there is no path between two computational basis states $\ket{j}, \ket{k} \in \mathcal{F}$ in the graph defined by $B$, the continuous-time quantum walk defined by the mixing operator cannot move probability amplitude from $\ket{k}$ to $\ket{j}$. Whether the transitions between in-constraint states are suppressed in the Zeno limit is in general dependent on the in-constraint space $\F$.

One way to avoid the issue of suppressed transitions is by choosing a mixer $B$ with a complete connectivity graph among computational basis states, i.e., $B = \ketbra{+}$. This mixer is also known as the complete-graph mixer~\cite{McClean_2021, bartschi2020grover}. It has been conjectured~\cite{McClean_2021} that mixers with high connectivity, such as the $B = \ketbra{+}$, can at best produce a Grover-like speedup since they do not make use of the structure of the cost operator. While it is unclear if this conjecture is true,
we emphasize that our approach can utilize any mixer and can efficiently enforce constraints as long as the difference between the maximum and minimum eigenvalues of the mixer is polynomial in the number of qubits.

\subsection{Numerical Experiments}\label{sec:numerical_experiments}

We now present the numerical experiments showing the power of the proposed method. The technique we propose is general, though in this section we consider only the problem of portfolio optimization (with both equality and inequality constraints) and only the QAOA and L-VQE algorithms. {The parameters in QAOA and VQE were optimized using COBYLA \cite{Powell1994} initialized with a large number of random initial points.} We compare the results to the state-of-the-art method of encoding constraints by introducing a penalty into the objective, and observe significant improvements in both approximation ratio and in-constraint probability. In addition to better performance, the proposed method does not require complicated tuning of the penalty factor.

\subsubsection{Benchmark: portfolio optimization}\label{sec:portfolio_opt}
The daily operation of a large financial institution requires solving many classically-hard optimization problems~\cite{herman2023,yalovetzky2021, he2023alignment}. Among such problems, one of the most important is portfolio optimization. Modern portfolio theory~\cite{markowitz1952harry} considers the task of finding a portfolio with a desired trade-off between risk and expected return. This task is typically formulated as an optimization problem, which is hard to solve classically in many settings, such as when the variables are required to only take on a discrete set of values. When designing an algorithm for portfolio optimization, a central consideration is the ability to incorporate a general class of constraints. Such constraints can come from regulatory or business considerations, with examples ranging from portfolio-level constraints (including budget and total number of assets) to asset-level constraints (such as minimum holding size).

The particular constrained portfolio optimization problems we study numerically arise from the discrete mean-variance Markowitz model~\cite{markowitz1952harry} and have the following objective function
\begin{equation}
    \min_{\bm{x}\in \F} q\bm{x}^{\mathsf{T}}\Sigma\bm{x}-\bm{\mu}^{\mathsf{T}}\bm{x},
\end{equation}
where $\F$ is defined by some set of constraints on the portfolio. We consider two sets of problems. In the first set, we impose an inequality constraint on the total size of the portfolio ($\sum_j x_j \leq C$). In the second set of problems, in addition to the inequality constraint on portfolio size, we include a constraint on the total expected return ($\sum_j \mu_jx_j \ge R$). For each of the two sets of constraints, we consider seven instances with between four and ten assets, for a total of fourteen instances. In all problem instances $\mathcal{F} \subset \mathbb{B}^n$, where $n$ is the number of assets.

\subsubsection{Zeno dynamics improves quantum optimization performance}
Figure~\ref{fig:zeno_vs_penalty} presents the comparison between QAOA with Zeno dynamics and QAOA with constraints enforced using a penalty factor on the fourteen problem instances described in the previous subsection. The penalty method is described in  \alt{the \emph{Methods} Section}{Section~\ref{sec:preliminaries}}. The solution quality is measured in terms of the approximation ratio $r$, a value between $0$ and $1$, with larger $r$ being better. The approximation ratio is  formally defined in \alt{the \emph{Methods} Section}{Section~\ref{sec:preliminaries_qopt}}. We consider QAOA with mixers $B=\sum_j\xgate_j$ (\ding{54} marker) and $B=\ketbra{+}$ (\ding{58} marker), and optimize the QAOA parameters exhaustively. To improve the performance of parameter optimization, we follow Ref.~\cite{montanaro2022peptide} and rescale the cost function so that the gradients with respect to $\boldsymbol{\beta}$ and $\boldsymbol{\gamma}$ are roughly of the same magnitude.

\begin{figure}[ht]
    \centerfloat  %
    \includegraphics{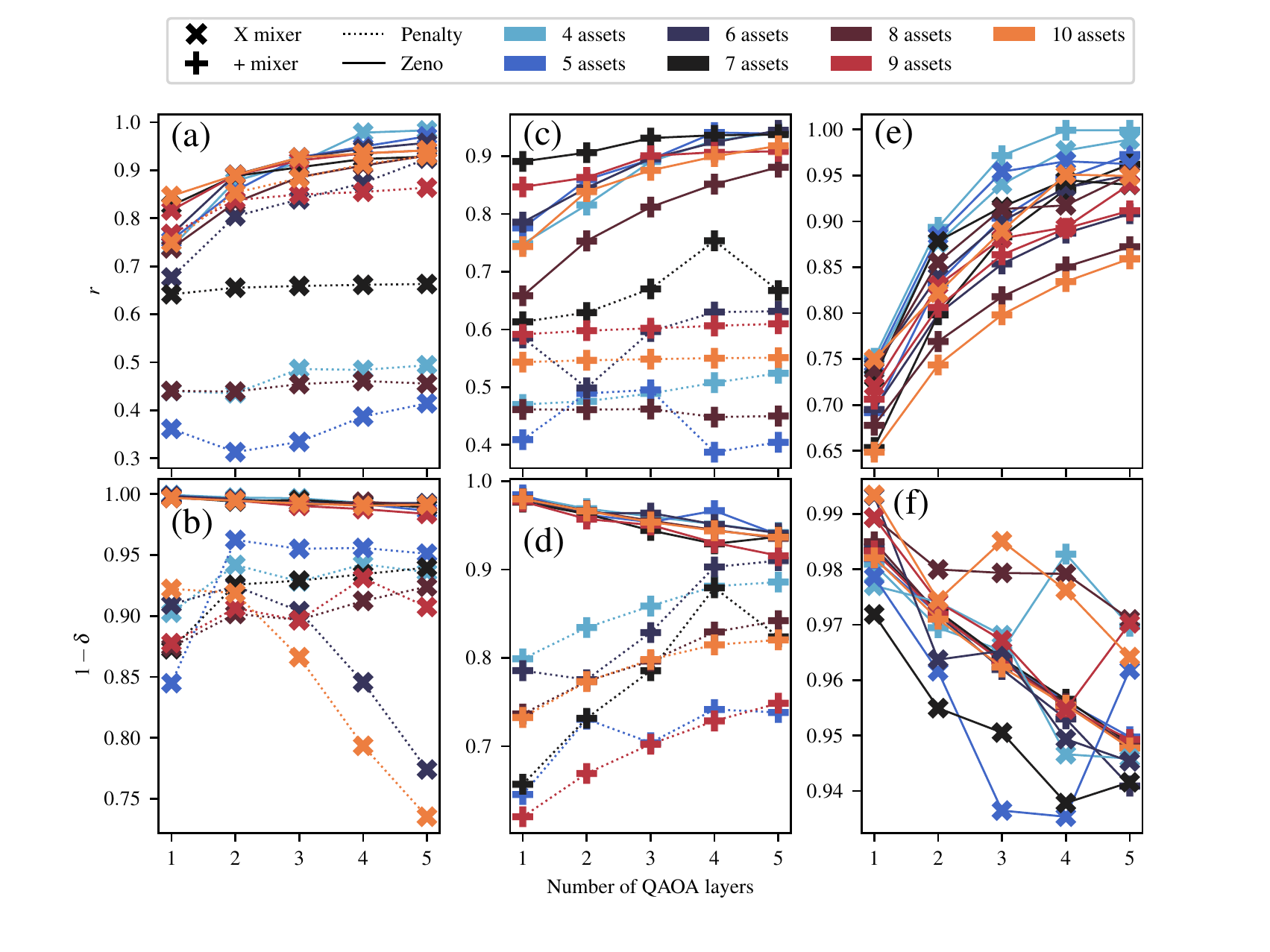}
    \caption{\textbf{Performance of QAOA with Zeno dynamics and QAOA with constraints enforced using penalty terms.} 
    Approximation ratio $r$ and out-of-constraint probability $\delta$ (correspondingly $1-\delta$ in-constraint probability) achieved by QAOA with constraints enforced using penalty terms (dotted lines) on problems (a,b,c,d) with a single constraint, and by QAOA with Zeno dynamics (solid lines)
    on problems with a single (a,b,c,d) and  multiple (e,f) constraint(s). The markers \ding{54} and \ding{58} indicate whether QAOA used the    $B=\sum_j\xgate_j$ mixer or $B=\ketbra{+}$  mixer, respectively. For all single constraint problems, QAOA with Zeno dynamics produces a superior approximation ratio and in-constraint probability (solid line is above dotted line with the same color). As penalty factor tuning is prohibitively difficult for problems with multiple constraints (see \alt{the \emph{Results} Section}{Section~\ref{sec:penalty_factor_tune}}), for these problems only Zeno dynamics results are presented.}
    \label{fig:zeno_vs_penalty}
\end{figure}

For instances with a single constraint (see dotted lines in Figure~\ref{fig:zeno_vs_penalty}(a,b,c,d)) we perform  extensive tuning of the penalty factor $\lambda$. For multi-constraint problems, the tuning becomes prohibitively expensive. Therefore, we exclude QAOA with constraints enforced through penalties from the comparison for problems with multiple constraints. The choice of the penalty factor and the difficulty of its optimization are discussed in detail in the next subsection.

We observe that Zeno dynamics (see solid lines in Figure~\ref{fig:zeno_vs_penalty}(a,b,c,d)) enables consistently better solution quality and in-constraint probability as compared to QAOA with constraints enforced using a penalty (dotted lines) for all problems considered. Furthermore, Figure~\ref{fig:zeno_vs_penalty}b shows that for 6 and 10 assets the in-constraint probability drops off rapidly with the number of QAOA layers if the penalty factor is kept constant.  This highlights an important limitation of enforcing the constraints via penalties, namely that the penalty factor must be tuned independently for each QAOA depth. In contrast, for QAOA with Zeno dynamics we obtain an explicit rule for how $\eta$, from \eqref{eqn:eta_defn}, should change with the QAOA depth (see Corollary~\ref{cor:qaoa_bounds}). However, for the numerics shown in Figure~\ref{fig:zeno_vs_penalty}, we fix $\eta$ to ensure a constant minimum in-constraint probability per layer. We observe good performance despite $\eta$ being a depth-independent constant in this case. We note that since $\eta$ was held constant
while $p$ varied, the in-constraint probability slowly decreases with the number of layers as predicted by Corollary \ref{cor:qaoa_bounds}. For $B=\ketbra{+}$ mixer, this results in an average number of measurements of $\approx 77$ for 6 assets and $\approx 35$ for 7 assets.

Since multiple constraints can be efficiently handled in the Zeno framework, in Figure~\ref{fig:zeno_vs_penalty}(e,f), we include the performance of QAOA with Zeno dynamics on problems with multiple constraints (one on the budget and one on the total expected return). The results show that the Zeno-enhanced QAOA is able to achieve a similar performance as it did for the single-constraint problems, with sufficiently high $p$.

We note that the in-constraint probability can be improved arbitrarily for the Zeno dynamics approach by decreasing $\eta$, without the need to re-optimize the QAOA parameters. This is due to the objective function landscape becoming independent of $\eta$ as the Zeno limit is approached. In fact, we observe that transferring parameters from a smaller to a larger number of measurements (larger to smaller $\eta$) works well even for practically relevant values of $\eta$. Figure~\ref{fig:eta_sweep_transf} shows the approximation ratio $r$ and in-constraint probability with directly optimized QAOA parameters and with pre-optimized parameters transferred from a fixed value of $\eta=1.6$ (marked with a star in the plot). We observe that for sufficiently small $\eta$, transfer works well and the difference in approximation ratio is negligible. Specifically,  parameter transfer using the  $B=\sum_j\xgate_j$ mixer and a \emph{total} of $33$, $75$, and $200$ measurements results in in-constraint probabilities of at least $85\%$, $89\%$, and $96\%$, respectively for the nine-assets, single-constraint problem at $p=5$. At the same time, if the number of measurements is very small ($\eta$ large), the objective function landscape is very different from the landscape in the Zeno limit, and the parameter transfer does not work well. We remark that while the in-constraint probability increases monotonically as $\eta$ decreases, no such guarantee is given for approximation ratio $r$. In fact, in Figure~\ref{fig:eta_sweep_transf} we observe that depending on the problem and the circuit depth, $r$ can either increase or decrease with $\eta$.

\begin{figure}
    \centering
    \includegraphics{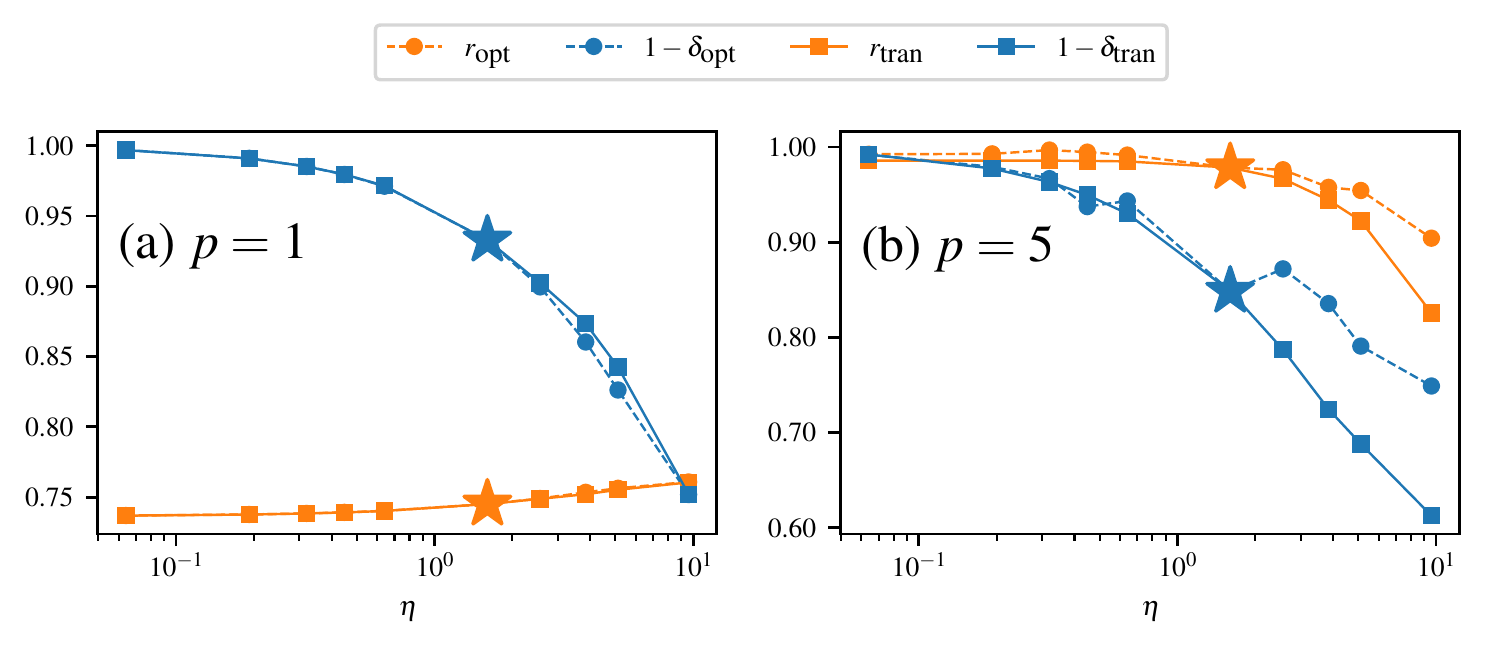}
    \caption{\textbf{Transferability of parameters in QAOA with Zeno dynamics.} 
    Performance of a $1$-layer (a) and $5$-layer (b) QAOA with Zeno dynamics and mixer $B = \sum_j\xgate_j$ with directly optimized parameters ($r_{\text{opt}}$, {$1 - \delta_{\text{opt}}$}) and with parameters transferred from a fixed value of $\eta=1.6$ ( $r_{\text{tran}}$, {$1-\delta_{\text{tran}}$}). The source is marked with a star. Corresponding to each case, $r$ signifies the approximation ratio and $\delta$ the out-of-constraint probability. For values of the hyperparameter $\eta$, which controls the number of measurements and is defined in Equation~\eqref{eqn:eta_defn}, smaller than $1.6$, the difference between performance with optimized and transferred parameters is negligible (dashed line very close to the solid line).}
    \label{fig:eta_sweep_transf}
\end{figure}

Note that the same approach of boosting the in-constraint probability without re-optimizing the QAOA parameters does not work if the constraints are enforced using penalties. Figure~\ref{fig:penalty_param_transfer} shows that transferring parameters from a fixed value of penalty factor (marked with a star) leads to the approximation ratio rapidly dropping off to random guess. It is however possible that better performance may be achieved by leveraging more sophisticated parameter transfer strategies, such as the rescaling rule proposed for the weighted MaxCut problem~\cite{2201.11785, sureshbabu2023parameter} or machine learning methods~\cite{khairy2019learning}.

\begin{figure}
    \centering
    \includegraphics{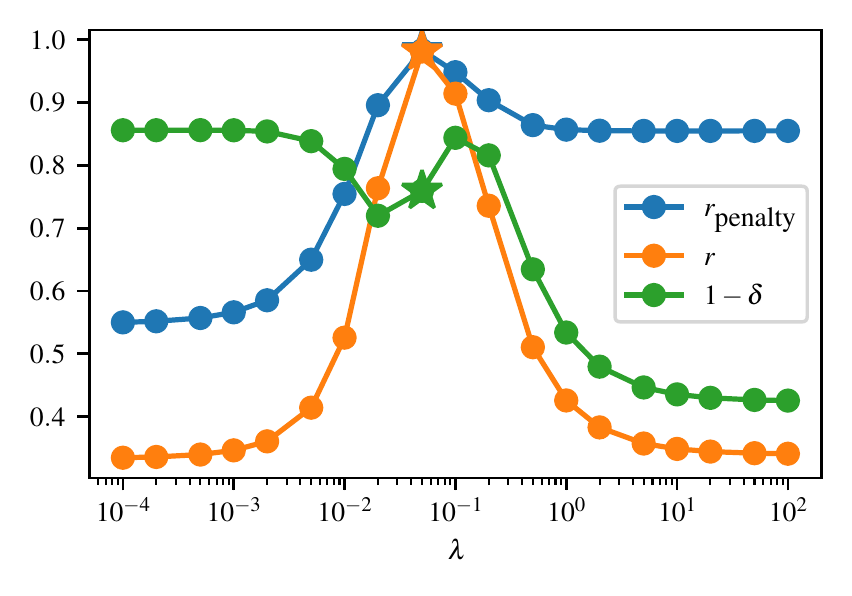}
    \caption{\textbf{Transferability of parameters in QAOA with penalty terms.}
    Performance of QAOA with $B = \sum_j\xgate_j$ mixer and constraints enforced through penalties with parameters transferred from a fixed value of penalty factor $\lambda=0.1$ (source marked with a star). The out-of-constraint probability is $\delta$. The approximation ratio $r$ (Equation \eqref{eq:approx_ratio_vqe}), unlike $r_{\text{penalty}}$ (Equation \eqref{eq:approx_ratio_penalty_vqe}), excludes the penalty objective and drops off to random guess if transferring parameters to values of $\lambda$ sufficiently different from source.}
    \label{fig:penalty_param_transfer}
\end{figure}

\begin{figure}[ht]
    \centering
    \includegraphics{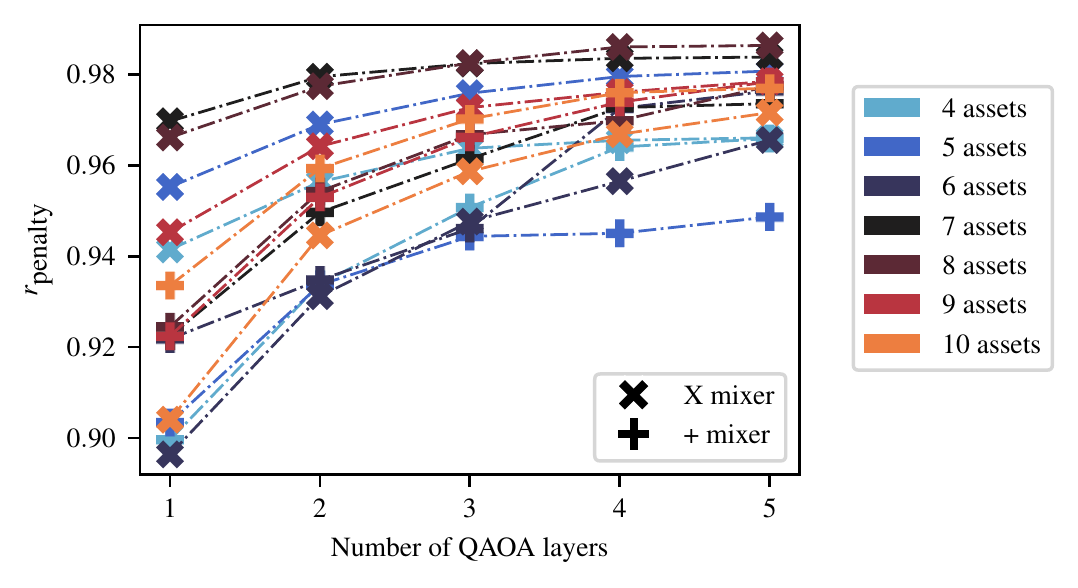}
    \caption{\textbf{Approximation ratio of QAOA with penalty terms using different numbers of QAOA layers.} 
    The approximation ratio (as defined in Equation~\eqref{eq:approx_ratio_penalty_vqe}) for the full objective with penalty terms increases monotonically with the number of QAOA layers, as expected.  However, the in-constraint approximation ratio (as defined in Equation~\eqref{eq:approx_ratio_vqe}) is not guaranteed to change monotonically, as seen in Figure~\ref{fig:zeno_vs_penalty}a,c. Color denotes the number of assets in the optimization problem, as shown in the legend. The markers \ding{54} and \ding{58} indicate whether QAOA used the    $B=\sum_j\xgate_j$ mixer or $B=\ketbra{+}$  mixer, respectively.}
    \label{fig:penalty_single_constraint_ar}
\end{figure}

While for QAOA with Zeno dynamics the approximation ratio $r$ given in Equation~\eqref{eq:approx_ratio_vqe} increases monotonically with the number of QAOA layers, this is not guaranteed for QAOA with constraints enforced through penalties. This is because the QAOA parameters are chosen with respect to the objective with penalties and the increased expressivity of the higher-depth circuit is only guaranteed to improve the performance with respect to that objective. Figure~\ref{fig:penalty_single_constraint_ar} shows that this is indeed the case and the approximation ratio $r_{\text{penalty}}$ given in Equation~\eqref{eq:approx_ratio_penalty_vqe} increases with the number of QAOA layers as expected.

Finally, we include the results for Zeno-enhanced L-VQE with $L=1$ in Equation~\eqref{eqn:ansatz_vqe_with_meas}. The structure of L-VQE is presented in Equation \eqref{eq:ansatz_lvqe} and further described in \alt{the \emph{Methods} Section}{Section~\ref{sec:preliminaries_qopt}}.  However, instead of using Corollary~\ref{cor:scheme_2_cor} to determine a sufficient value for the number of measurements $N$, we heuristically set  $N= 100$. Table~\ref{tab:lvqe} presents the results. As expected, L-VQE achieves high approximation ratio, while Zeno dynamics enables high in-constraint probability. As the total number of measurements is kept fixed for all problems and parameter values, slightly lower in-constraint probability is observed for higher qubit counts. As is the case for QAOA, the in-constraint probability can be increased by increasing the number of measurements.

\begin{table}[h]
    
    \centering
    \resizebox{0.5\textwidth}{!}{
    \begin{tabular}{c|c|c|c|c}
          \multirow{2}{*}{\# assets} & \multicolumn{2}{c|}{Single} & \multicolumn{2}{c}{Multiple}  \\
         & $r$ & ${1-\delta}$ &  $r$ & ${1 - \delta}$ \\
         \hline
4 & 0.995 & 0.964 & 0.9996 & 0.980 \\
5 & 0.995 & 0.913 & 0.977 & 0.909 \\
6 & 0.972 & 0.895 & 0.964 & 0.963 \\
7 & 0.979 & 0.870 & 0.917 & 0.936 \\
8 & 0.956 & 0.887 & 0.948 & 0.944 \\
9 & 0.967 & 0.844 & 0.961 & 0.974 \\
10 & 0.914 & 0.811 & 0.910 & 0.960        
    \end{tabular}
}
    \caption{\textbf{Performance of L-VQE with Zeno dynamics on the benchmark problems.} Layer variational quantum eigensolver (L-VQE) enhanced with Zeno dynamics obtains high approximation ratio $r$ (Equation \eqref{eq:approx_ratio_vqe}) and high in-constraint probability $1-\delta$. The algorithm was applied to both problems with a single constraint and multiple constraints. The Zeno-enhanced L-VQE circuit was constructed by inserting measurements after all parameterized gates have been applied. This corresponds to $L=1$ in Equation~\eqref{eqn:ansatz_vqe_with_meas}. The number of measurements was heuristically set to $100$.}
    \label{tab:lvqe}
\end{table}

\subsubsection{Penalty factor tuning is difficult} \label{sec:penalty_factor_tune}

An important advantage of our method is the simplicity of hyperparameter tuning, as only $\eta$ in Equation~\eqref{eqn:eta_defn} needs to be chosen. This choice is made easy by Theorem~\ref{thm:main_theorem} and its corollaries, which 
imply the monotonic increase of in-constraint probability with decrease in $\eta$. This is in sharp contrast with the penalty approach, where the performance crucially depends on the penalty strength, which is hard to tune in general. We now present how the penalty strength was chosen for the experiments above, and discuss the challenges that arose in doing so.

\begin{figure}
    \centering

    \includegraphics{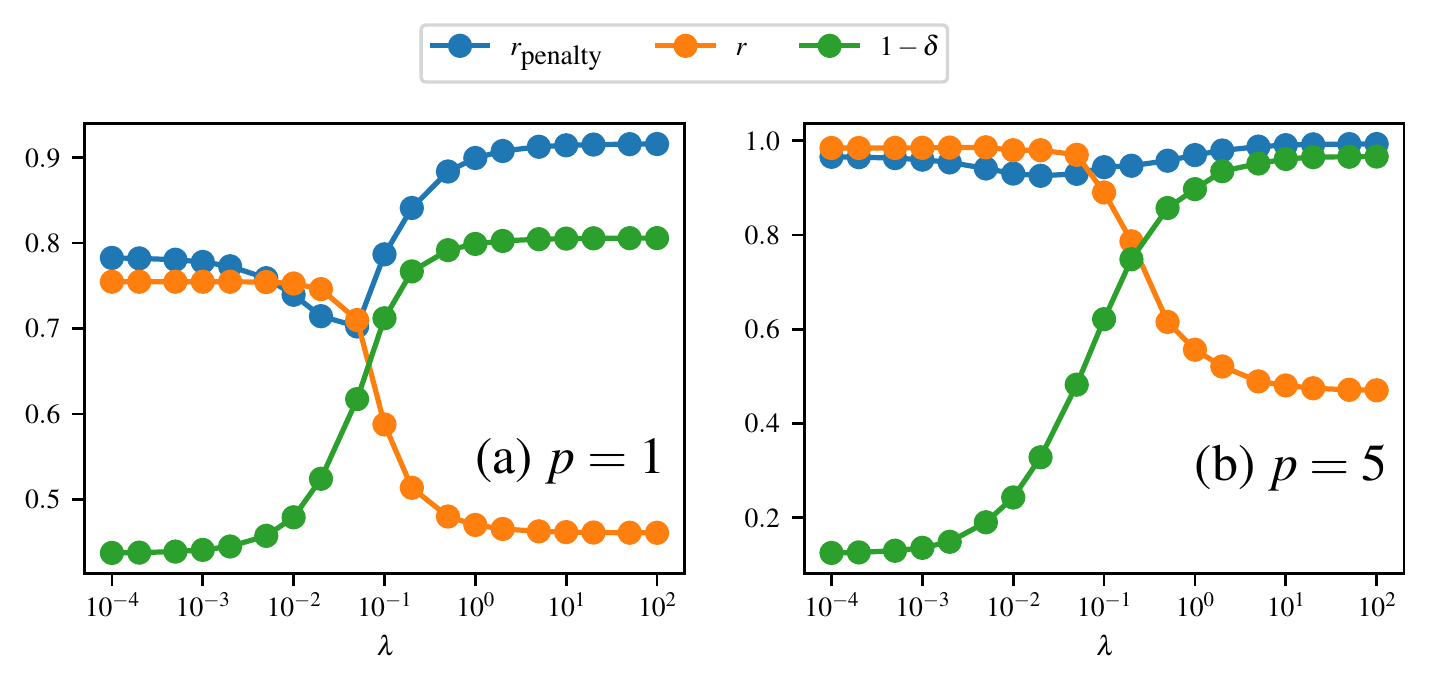}
    \caption{\textbf{Difficulty of penalty factor tuning for QAOA with a single penalty term.} 
    Performance of QAOA with a single constraint enforced through a penalty term with varying penalty factors $\lambda$. A trade-off occurs between the approximation ratio $r$ (Equation \eqref{eq:approx_ratio_vqe}) and the in-constraint probability {$1 - \delta$}. As shown in (a), the maximum in-constraint probability is limited by the expressivity of the QAOA circuit at low depth ($1$ QAOA layer, or $p=1$). With $5$ layers (b),  QAOA is able to achieve better performance in terms of the penalized objective, as indicated by the approximation ratio $r_\text{penalty}$ (Equation \eqref{eq:approx_ratio_penalty_vqe}). However, there is still a significant trade-off between the true objective $r$ and in-constraint probability.}\label{fig:penalty_factor_sweep_1d}
\end{figure}

\begin{figure}
    \centering
    \includegraphics{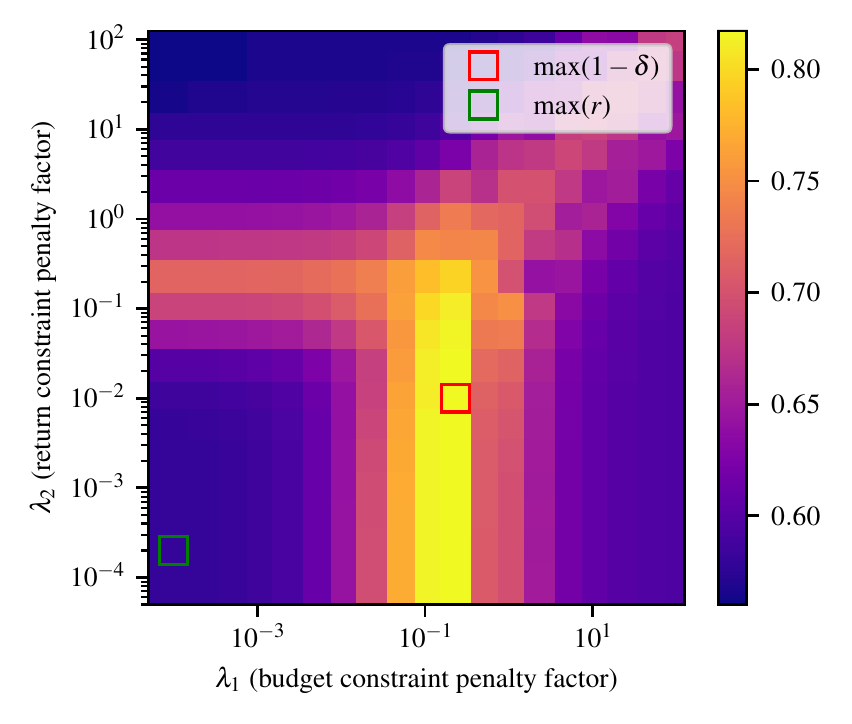}
    \caption{\textbf{Difficulty of penalty factor tuning for QAOA with two penalty terms.} 
    In-constraint probability of optimized solution using QAOA applied to an objective with two penalty functions,  associated with separate constraints. The corresponding penalty factors are indicated by $\lambda_1$ and $\lambda_2$, respectively. One is a maximum budget constraint and the other is  minimum return constraint. The value $\delta$ is the out-of-constraint probability, and $r$ is the approximation ratio (as defined in Equation~\eqref{eq:approx_ratio_vqe}). The square highlighted in red corresponds to the maximum in-constraint probability ($1-\delta$) over all combinations of the two penalty factors, and the square highlighted in green corresponds to the maximum $r$. This highlights that both large in-constraint probability and large approximation ratio cannot be obtained. 
 The figure shows results for the $B=\ketbra{+}$ mixer and $3$-layer QAOA, though we observe similar behavior for all mixers and QAOA depths considered. }
    \label{fig:penalty_factor_sweep_2d}
\end{figure}

Figure~\ref{fig:penalty_factor_sweep_1d} presents the performance of QAOA on a single-constraint problem enforced using a penalty term with varying penalty factors $\lambda$. 
In the plot, the in-constraint probability {$1 - \delta$} monotonically increases with $\lambda$, while the approximation ratio $r$ decreases.
This indicates a trade-off between $r$ and the out-of-constraint probability $\delta$, and hence hyperparameter tuning on $\lambda$ must be performed in order to obtain a good approximation ratio while meeting requirements on the minimum in-constraint probability.
We also observe that for QAOA with small $p$, $1-\delta$ tends to levels off at a value far below what is achievable by using Zeno dynamics. 
For example, the top figure in Figure~\ref{fig:penalty_factor_sweep_1d} shows that the highest in-constraint probability achievable with $p=1$ is around $80\%$ for the problem tested.
Given that the approximation ratio with the penalty term $r_\text{penalty}$ is above $0.9$ for the high $\lambda$ regime, it indicates that the maximum achievable in-constraint probability may be limited by the expressivity of the variational circuit. 
On the other hand, constraints enforced by Zeno dynamics do not suffer from such problems, as the in-constraint probability can be arbitrarily boosted regardless of the expressivity of the varational circuit (see Figure~\ref{fig:eta_sweep_transf}). In the numerical experiments, we choose the value of $\lambda$ independently for each problem instance with the goal of obtaining a high in-constraint probability $1-\delta$. Since we show that the factor $\lambda$ trades off $r$ and $\delta$, both cannot be improved at the same time. This suggests that there does not exist a choice of $\lambda$ such that QAOA with the penalty method outperforms QAOA with Zeno dynamics.

\begin{figure*}[!tb]
    \centering
    \includegraphics[width=1.0\textwidth]{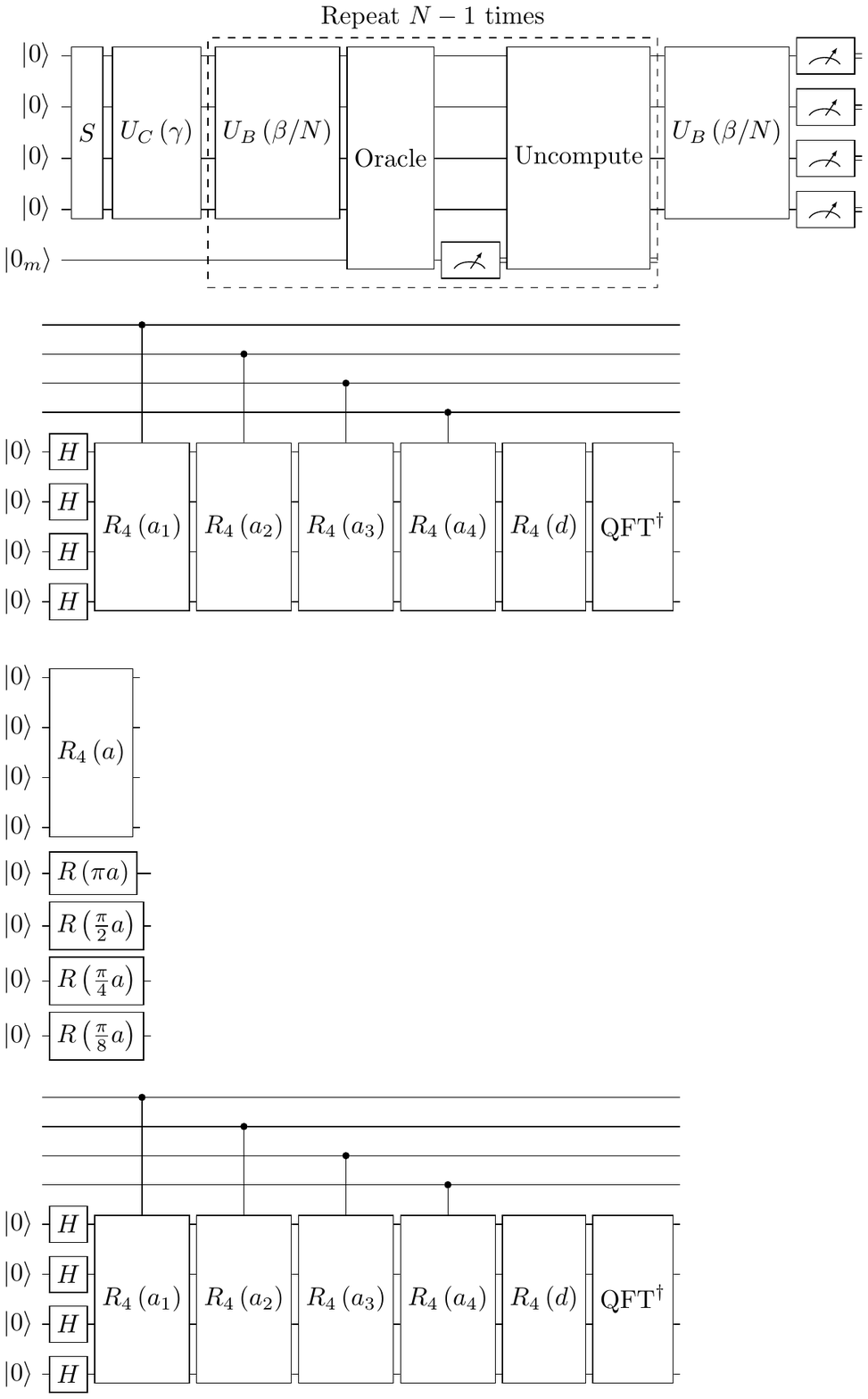}
    \caption{\textbf{Quantum circuit for QAOA with Zeno dynamics.}
    QAOA circuit with Zeno dynamics used in hardware runs for one-layer QAOA ($p=1$) on four-asset problems. The operator $S$ prepares a uniform superposition over feasible states.}
    \label{fig:zeno_hardware_qaoa_circuit}
\end{figure*}

\begin{figure*}[!tb]
    \centerfloat  %
    \subfloat[]{\includegraphics[width=1.0\textwidth]{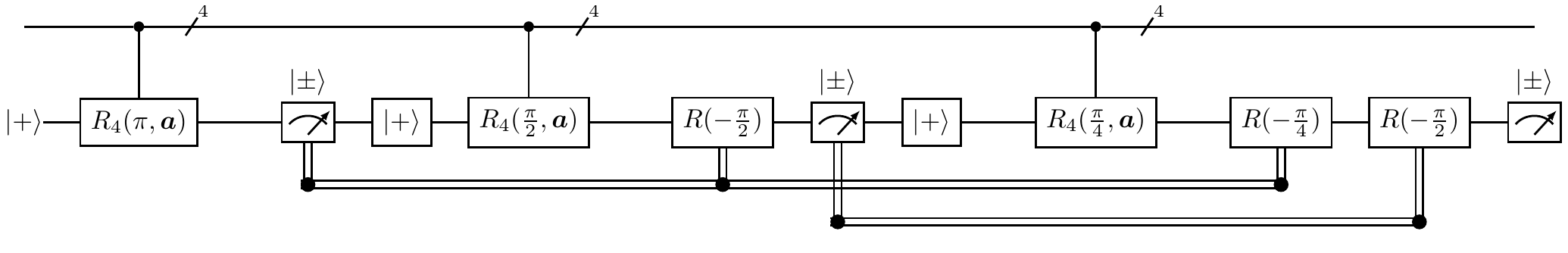}}\\
    \subfloat[]{\includegraphics[width=0.6\textwidth]
    {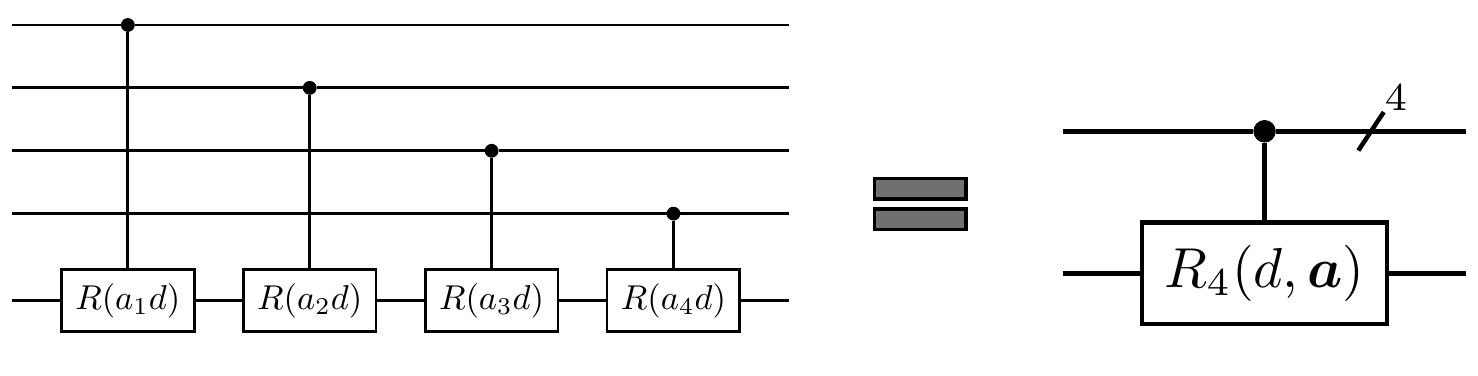}}
    \caption{\textbf{QFT adder with QCL.} 
    Quantum circuits for (a) semiclassical quantum Fourier transform adder with quantum conditional logic (QCL) used in the hardware experiments involving equality constraints and (b) the four-qubit rotation gate used in (a). Note that $R(\alpha)$ denotes a phase gate. For the  equality-constraint experiment executed on the  H1-2 quantum device, we set  $\bm{a} := (a_1, a_2, a_3, a_4) = (2, -1, -1, 0)$. The uncomputation step consists of resetting the auxiliary qubit to the $\ket{+}$ state.}
    \label{fig:qcl_fourier_oracle}
\end{figure*}

For problems with multiple constraints, hyperparameter tuning should generally be performed on each penalty factor $\lambda_j$ included in the relaxed objective (Equation~\eqref{eq:obj_relaxed}).
This means that hyperparameter tuning can quickly become infeasible, as the search space for all $\lambda_j$'s grows exponentially with the number of penalty terms.
We show in Figure~\ref{fig:penalty_factor_sweep_2d} how hyperparameter tuning works with two penalty factors: $\lambda_1$ and $\lambda_2$, which correspond to penalty terms enforcing the budget constraint and the return constraint respectively.
The figure shows the in-constraint probability of the optimal solution obtained with varying $\lambda_1$ and $\lambda_2$.
Similar to the single-constraint case, maximal approximation ratio $r$ and maximal in-constraint probability $1-\delta$ cannot be simultaneously achieved.
Specifically, the solutions with the maximal $r$ and maximal {$1 - \delta$} have very different values in $\lambda_1$ and $\lambda_2$.
Moreover, unlike Figure~\ref{fig:penalty_factor_sweep_1d}, Figure~\ref{fig:penalty_factor_sweep_2d} clearly shows the non-monotonic behavior of $1-\delta$ in both $\lambda_1$ and $\lambda_2$.
In fact, we observe a similar behavior across many of the single- and multi-constraint problems that we have tested, and for both the $B=\sum_j\xgate_j$ and $B=\ketbra{+}$ mixers.
This indicates that tuning the penalty factors is indeed difficult in the general case.

\subsection{Hardware Experiments}\label{sec:hardware_experiments}
{While the numerical experiments presented earlier show evidence of the performance of our technique, they do not make use of any concrete circuit implementations of the constraint-checking oracles. In this section, we consider optimized circuit implementations of constraint-checking oracles for two proof-of-concept portfolio optimization problems on noisy quantum hardware. This enables us to validate all of the hardware features, such as mid-circuit measurements and quantum conditional logic (QCL), that are required to implement the efficient oracle construction presented in \alt{the \emph{Methods} Section}{Section~\ref{sec:implement_oracle}}.} 

We execute QAOA with Zeno dynamics on the Quantinuum H1-2 trapped-ion quantum processor. Our implementation uses constraint-checking oracles that perform quantum arithmetic in the Fourier domain, following directly the construction in \alt{the \emph{Methods} Section}{Section~\ref{sec:implement_oracle}}. We observe that increasing the number of measurements improves the in-constraint probability {$1 - \delta$}, as expected. The improvement from additional measurements continues up to a two-qubit gate depth of 148, at which point the hardware noise prevents further improvements. 

The experiments presented in this Section utilize $p=1$ QAOA and the $B=\sum_j\xgate_j$ mixer. We use the cost function of the four-assets portfolio optimization problem used in the numerics described in \alt{the \emph{Results} Section}{Section~\ref{sec:numerical_experiments}}, but apply different constraints.  We consider two instances with linear constraints, one with an equality constraint and one with an inequality constraint. Figure~\ref{fig:zeno_hardware_qaoa_circuit} shows a high-level circuit diagram. For each problem, the QAOA parameters are first optimized using a noiseless simulator. All circuit executions use $2000$ shots and no error mitigation.

The first portfolio optimization instance we consider has an equality constraint on the four binary variables $x_1, x_2, x_3, x_4$: $2x_1 - x_2 - x_3 = 0$. As discussed in \alt{the \emph{Methods} Section}{Section~\ref{sec:quantum_fourier_arithmetic}}, the semiclassical quantum Fourier transform (QFT)  can be utilized for equality constraints. The semiclassical QFT makes use of QCL and midcircuit measurements, which are features supported by the H1-2 device. This results in an oracle that uses only one auxiliary qubit, and thus the circuit uses five qubits in total. The circuit for the oracle is shown in Figure~\ref{fig:qcl_fourier_oracle}. We note that the uncomputation step consists of resetting the one auxiliary qubit to the $\ket{+}$ state.

\begin{figure*}[!tb]
    \centering
    \subfloat[]{\includegraphics[width=0.6\textwidth]{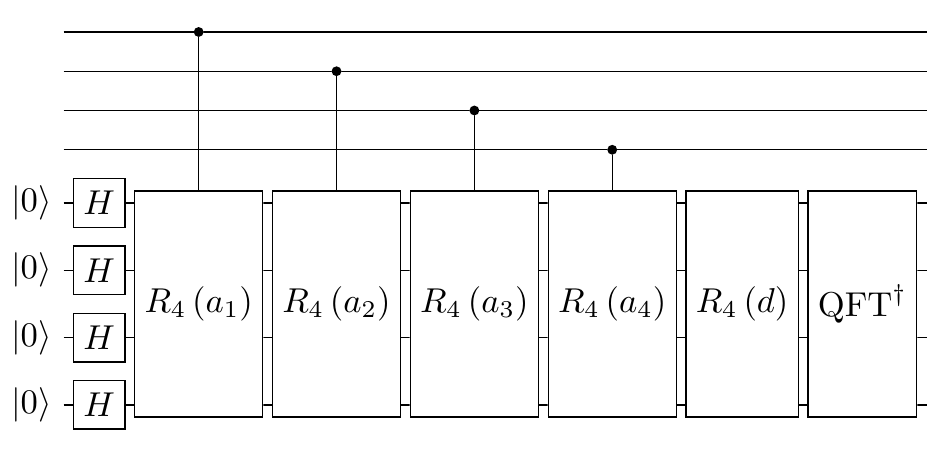}}
    \hspace{0.3in}
    \subfloat[]{\includegraphics[width=0.3\textwidth]{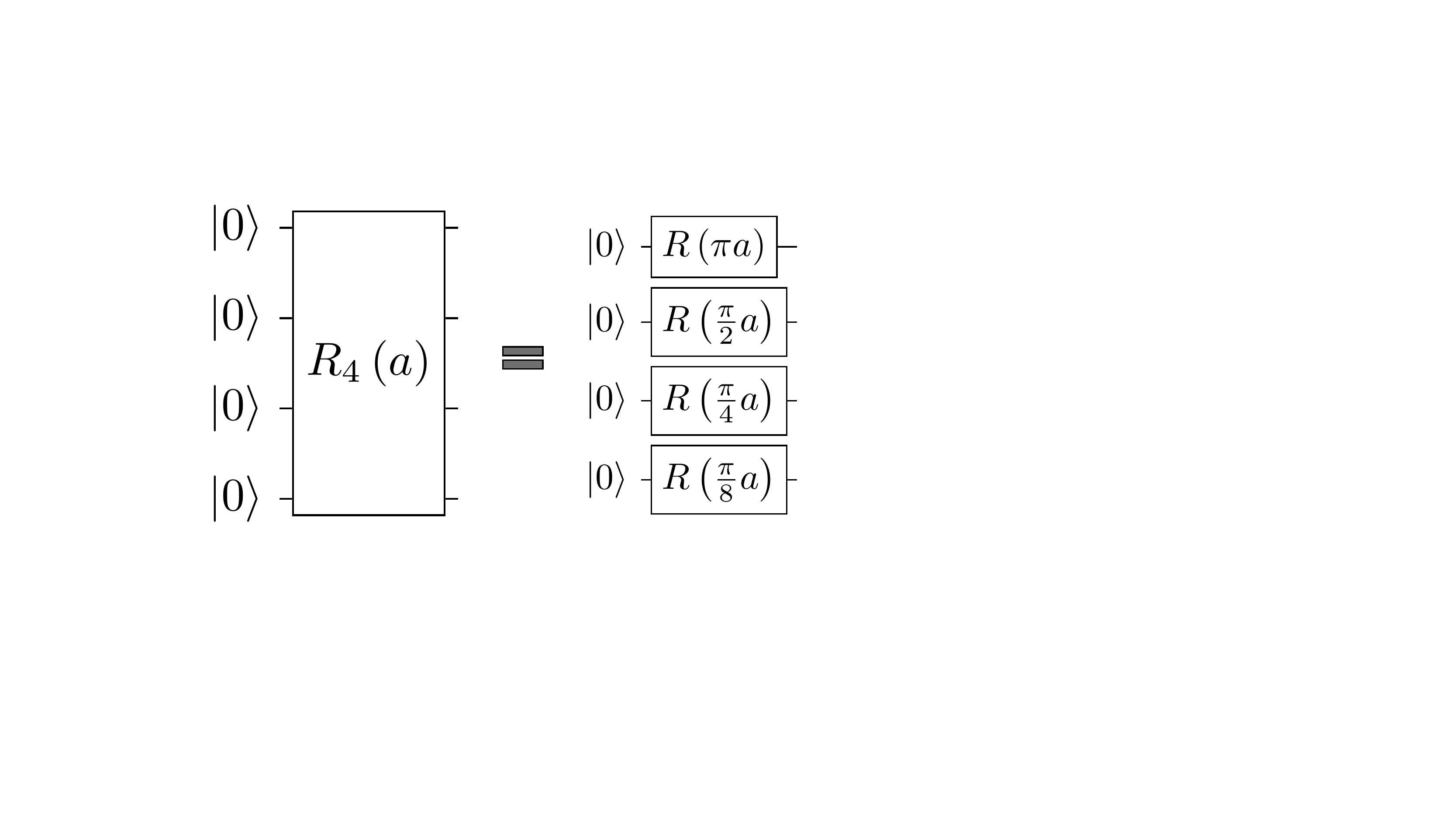}}
    \caption{
    \textbf{QFT adder without QCL.}
    Quantum circuits for (a) the quantum Fourier transform (QFT) adder used in the hardware experiments and (b) the four-qubit rotation gate used in (a). Note that $R(\alpha)$ denotes a phase gate. For the inequality-constraint  experiment, we set $a_1 = a_2 = a_3 = a_4 = 1, d = -3$ and used four qubits for precision. For the equality-constraint experiment, without quantum conditional logic (QCL), we set  $a_1 = 2, a_2 = a_3 = -1, a_4 = d = 0$ and used only three qubits for precision. For the inequality constraint, the inverse of the oracle is applied after measuring the qubit encoding the sign. However, for the equality constraint, since all auxiliary qubits are measured, we do not need to apply the inverse QFT operator and can simply reset all auxiliary qubits to the ground state.
    Note that here the inverse QFT operator ($\text{QFT}^{\dagger}$) does not include swaps as the reordering has been done by rearranging the banks of controlled rotations.}
    \label{fig:qft_fourier_oracle}
\end{figure*}

\begin{figure*}
    \centerfloat  %
    \includegraphics[width=\textwidth]{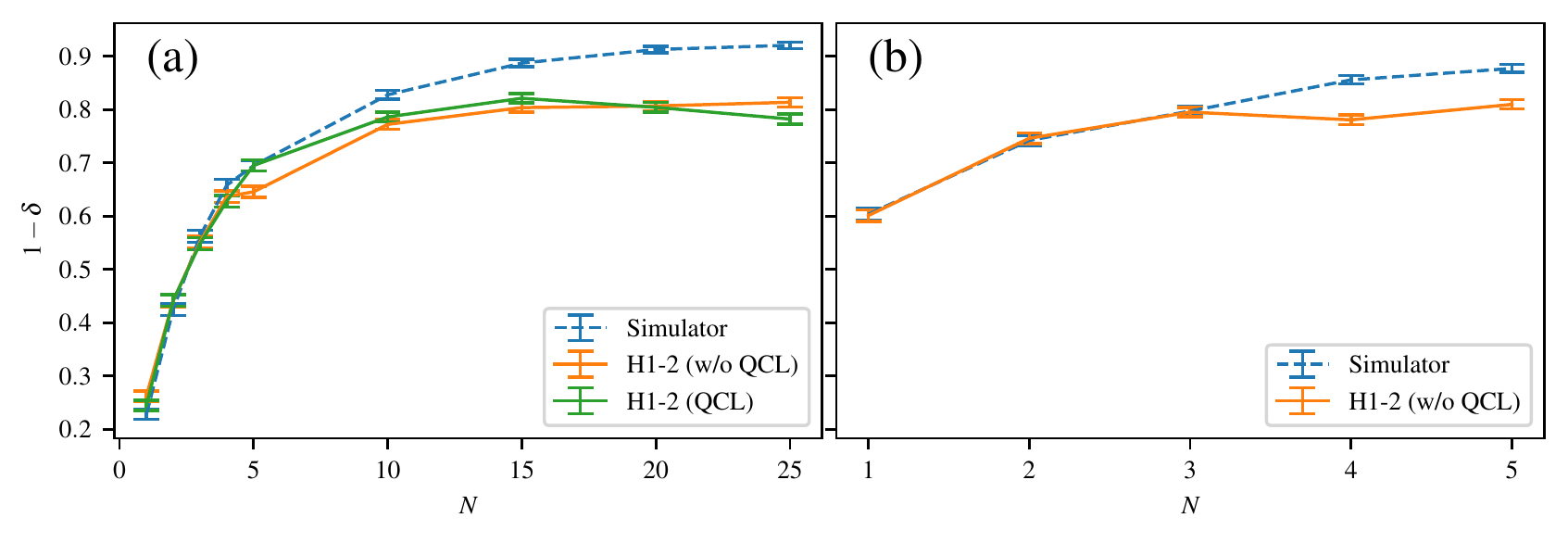}
    \caption{\textbf{Simulation and hardware experiment results using QAOA with Zeno dynamics.}
    QAOA with $p=1$ and Zeno dynamics was applied to solve a four-asset problem with an equality constraint $2x_1 - x_2 - x_3 = 0$ (a) and inequality constraint $\sum_{j=1}^4x_j \leq 2$ (b). The circuits were executed on a classical simulator and on the H1-2 quantum device. The oracles are implemented using arithmetic in the Fourier domain.  For the equality constraint (a), the quantum conditional logic (QCL) implementation of the Fourier adder used one auxiliary qubit, and the version without QCL used three auxiliary qubits. The Fourier adder used for the inequality constraint (b) used four auxiliary qubits. Error bars indicate the standard error of the mean arising from finite sampling ($2000$ shots). The in-constraint probability {$1-\delta$} grows with the number of measurements ($N$).}
    \label{fig:hardware_delta_vs_N}
\end{figure*}

\begin{figure*}
    \centerfloat  %
    \subfloat[]{\includegraphics[width=0.5\linewidth,trim={0 0 0 0.12in}]{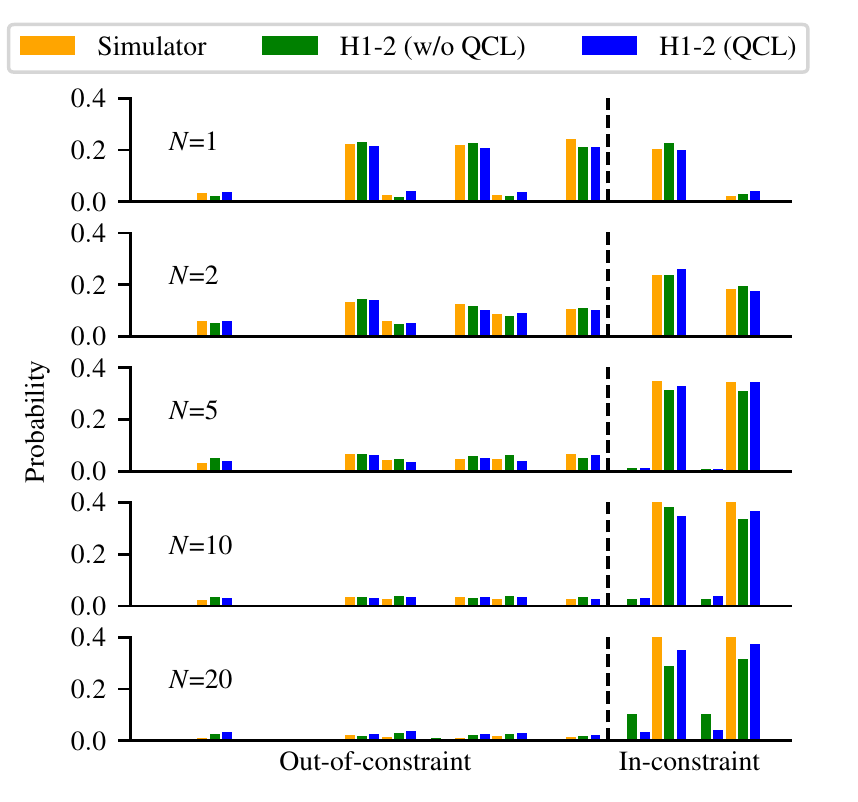}}  %
    \subfloat[]{\includegraphics[width=0.5\linewidth,trim={0 0 0 0.12in}]{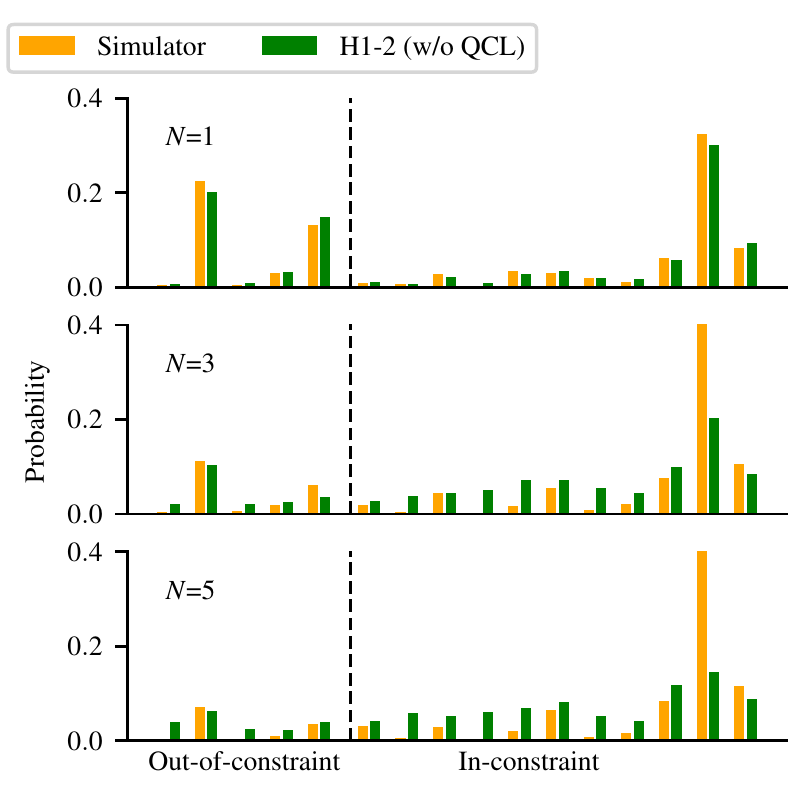}}  %
    \caption{
    \textbf{Effectiveness of constraint enforcement using QAOA with Zeno dynamics in simulation and hardware experiments.}
    Distribution of final measurement results obtained from QAOA applied to the equality- (a) and inequality-constrained (b) problem for different numbers of measurements ($N$). For the equality-constrained problem experiments were executed both with and without quantum conditional logic (QCL). Each column corresponds to a computational basis state (either in-constraint or out-of-constraint), and the columns are ordered by objective value (to the right is better). The circuits were executed on a classical simulator and on the H1-2 quantum device. There is strong agreement between the hardware results and results from noise-free simulation.}
    \label{fig:hardware_hist}
\end{figure*}

As a comparison, we also implement the coherent QFT (Figure~\ref{fig:qft_fourier_oracle}) on three qubits, resulting in seven qubits in total. After applying the oracle and measuring, all auxiliary qubits are reset to the ground state for the uncompute step. Figure~\ref{fig:hardware_delta_vs_N}a shows the in-constraint probability as a function of the number of projective measurements. Figure~\ref{fig:hardware_hist}a shows the distributions of measurement outcomes of QAOA for varying numbers of measurements ($N$), with the outcomes (computational basis states) ordered by the objective function value. For both implementations, the in-constraint probability improves with the number of measurements up to $N\approx 15$. For a higher number of measurements, the hardware noise arising from high circuit depth prevents further improvements in the in-constraint probability {$1-\delta$}.

While the QCL and non-QCL implementations both perform similarly, we do note a reduction in the number of  two-qubit gates and auxiliary qubits. For QCL and $N=15$, the two-qubit gate depth was 122 and the count was 123. Without QCL, for $N=15$, the two-qubit gate depth was 148 and the count was 165. The similar performance between QCL and non-QCL versions despite the difference in gate count may be due to the higher impact of measurement error on the QCL implementation.

The second portfolio optimization instance we consider has a cardinality (Hamming-weight) inequality constraint $\sum_{j=1}^4x_j \leq 2$. For this problem, it is necessary to utilize the coherent QFT, and thus QCL does not lead to a resource-requirement reduction. The QFT adder is used to compute $\sum_j x_j - 3$, which requires four qubits to accommodate the range. In addition, unlike the equality-constraint case, the inverse oracle is necessary for uncomputation. The system is in-constraint when the most-significant qubit, i.e., the sign bit, is a one. The circuit for the oracle is shown in Figure~\ref{fig:qft_fourier_oracle}. Similar to the previous run, we plot the in-constraint probability for varying numbers of measurements (Figure~\ref{fig:hardware_delta_vs_N}b), as well as, the measurement distributions obtained from QAOA (Figure~\ref{fig:hardware_hist}b). For $N = 3$, the two-qubit gate depth is 112 and the count is 186. Similarly to the experiments with the equality constraint, the in-constraint probability $1-\delta$ improves until $N=3$. For a higher number of measurements, the hardware noise prevents further improvements. 

Note that the performance deteriorates at a significantly lower $N$ for the inequality constraint problem than equality. This occurs even though the two-qubit circuit depth is lower for the inequality case and the two-qubit gate count is not significantly higher. Besides the inclusion of an additional qubit, one potential reason for this is that for the inequality constraint, only one of the auxiliary qubits is measured and then the inverse oracle is applied. This allows for errors to accumulate more and propagate to the rest of circuit. However, in the equality constraint case, after applying the oracle, all auxiliary qubits are measured and then reset to the ground state. In addition, the total gate count happens to be significantly higher for the inequality constraint case.

\section{Discussion}

In this work, we propose an approach for enforcing constraints in quantum optimization and demonstrate its effectiveness by applying it to constrained instances of portfolio optimization in simulation and on a trapped-ion quantum processor. Our technique has two major advantages: the ability to enforce a very general class of constraints and the simplicity of hyperparameter tuning. Two important downsides of our approach are the complexity of implementing the measurement and the possibility of the measurements resulting in trivial dynamics.

Implementing the oracle for a constraint in general requires quantum arithmetic and may lead to high gate count for more complex constraints. However, the asymptotic efficiency of our approach makes it viable for fault-tolerant quantum devices. Additionally, reductions in the cost of implementing quantum arithmetic, such as techniques utilizing quantum conditional logic, can further reduce the overhead of the proposed method.

Moreover, for noisy quantum devices, additional performance improvements can be obtained by leveraging advanced algorithm-specific error mitigation techniques such as the ones recently proposed for QAOA~\cite{Shaydulin2021,2204.05852}. Such techniques may help bridge the gap between the noisy near-term devices and the error correction likely required to execute circuits of sufficient depth to provide performance improvements over classical algorithms~\cite{takagi2022fundamental,farhi2020quantum,Sanders2020}.

As discussed in \alt{the \emph{Results} Section}{Section~\ref{sec:mixer_limitations}}, restricting the evolution to the Zeno subspace may result in trivial dynamics for certain mixers. Therefore an important consideration when applying the proposed technique is evaluating whether the particular choice of mixer has this behavior. As this effect would apply generally to all instances with a given class of constraints, the mixer only needs to be analyzed once for a class of problems.

\section{Methods}

\subsection{Preliminaries} \label{sec:preliminaries}

We begin by briefly introducing the relevant concepts and setting the notation. We undertake the task of minimizing an objective function $f$ defined on the Boolean cube, $\mathbb{B}^n$, over the set of feasible solutions $\F\subseteq\mathbb{B}^{n}$:
\begin{equation}
    \min_{\bx\in \F}f(\bx).
\end{equation}
We consider sets $\F$ of the form $\F=\{\bx\in\mathbb{B}^{n} \;\vert\; \Bar{g}_j(\bx)=0 \; \forall j\}$, where $\Bar{g}_j(\bx)$ is an oracle that returns $0$ if $\bx$ satisfies the $j$-th constraint and a value strictly greater-than $0$ otherwise. This general definition includes most commonly considered problems such as those with equality and inequality constraints. 

This constrained optimization problem can be solved by relaxing the constraints and introducing penalty terms as follows:
\begin{equation}
    \label{eq:obj_relaxed}
    \min_{\bx\in \mathbb{B}^n}f_{\text{penalty}} = \min_{\bx\in \mathbb{B}^n}f(\bx) + \sum_j \lambda_j \Bar{g}_j(\bx),
\end{equation}
where $\lambda_j \in \mathbb{R}^+$ are the penalty factors.

Specifically, for an equality constraint $g(\bx)=0$, the penalty function may be written as 
\[
    \Bar{g}(\bx) = \left[ g(\bx) \right]^2.
\]
On the other hand, an inequality constraint $g(\bx) \ge 0$ can be converted into an equivalent equality constraint $g(\bx) - \hat{s} = 0$ by introducing a \emph{slack variable} $\hat{s} \in [0, g_{\max}]$, where $g_{\max} = \max_{\bx \in \F} g(\bx)$.
If we assume $g(\bx)$ can be discretized with a spacing of $\Delta_g$, then $\hat{s}$ can be implemented using $n_{\text{slack}} = \lceil \log_2 (g_{\max} / \Delta_g) \rceil$ binary variables $\boldsymbol{s} = (s_1, \ldots, s_{n_{\text{slack}}})^{\mathsf{T}}$, and the resultant equality constraint is $g(\bx) - \Delta_g \sum_{j} 2^{j-1}s_j = 0$.
Therefore the penalty function for an inequality constraint can be written as 
\[
    \Bar{g}(\bx;\boldsymbol{s}) = \left[g(\bx) - \Delta_g \sum_{j=1}^{n_{\text{slack}}} 2^{j-1}s_j\right]^2.
\]

The magnitudes of the penalty factors $\lambda_j$ control how much the constraint violations are penalized. Intuitively, a higher value of $\lambda_j$ should lead to a higher in-constraint probability. However, in practice, the relationship between the penalty factor, the in-constraint probability and the solution quality may be non-monotonic. This makes choosing $\lambda_j$ harder. We discuss the difficulty of tuning the penalty factors in \alt{the \emph{Results} Section}{Section~\ref{sec:penalty_factor_tune}}.

\subsubsection{Quantum algorithms for approximate optimization} \label{sec:preliminaries_qopt}

In this work, we focus on the class of quantum optimization algorithms that use a parameterized quantum evolution to prepare a state, such that the corresponding measurement outcomes contain a high-quality, valid solution to the original optimization problem with high probability. This parameterized state, a restatement of Equation~\eqref{eq:results_ansatz_vqe}, is prepared by applying a parameterized evolution $U(\boldsymbol{\theta})$ to some initial state $\ket{s}$:
\begin{equation}
    \label{eq:ansatz_vqe}
    \ket{\psi(\boldsymbol{\theta})} = U(\boldsymbol{\theta})\ket{s} = \prod_{j=1}^{m} e^{-i\theta_jH_j} \ket{s},
\end{equation}
where $H_j$ is some Hamiltonian, e.g., a tensor product of single-qubit Pauli operators.

Let $C=\sum_{\bm{x}\in\mathbb{B}^n}f(\bm{x})\ketbra{\bm{x}}$ be the operator encoding the objective function $f$ on qubits and  $C_{\text{penalty}}=\sum_{\bm{x}\in\mathbb{B}^n}f_{\text{penalty}}(\bm{x})\ketbra{\bm{x}}$ be the operator encoding the relaxed objective function \eqref{eq:obj_relaxed}. The figures of merit used to evaluate the quality of a parameter $\bm{\theta}^*$ obtained by algorithms that employ parameterized circuit \eqref{eq:ansatz_vqe} are approximation ratios, defined as follows:
\begin{equation}\label{eq:approx_ratio_vqe}
    r = \frac{\bra{\psi(\boldsymbol{\theta^*})}C_{\F}\ket{\psi(\boldsymbol{\theta^*})} - f^{\max}}{f^{\min} - f^{\max}}
\end{equation}
and
\begin{equation}\label{eq:approx_ratio_penalty_vqe}
    r_{\text{penalty}} = \frac{\bra{\psi(\boldsymbol{\theta^*})}C_{\text{penalty}}\ket{\psi(\boldsymbol{\theta^*})} - f_{\text{penalty}}^{\max}}{f_{\text{penalty}}^{\min} - f_{\text{penalty}}^{\max}},
\end{equation}
where $C_{\F}=\sum_{\bm{x}\in\F}f(\bm{x})\ketbra{\bm{x}}$, $f^{\min}=\min_{\bx\in \F}f(\bx)$, $f^{\max}=\max_{\bx\in \F}f(\bx)$, $f_{\text{penalty}}^{\min}=\min_{\bx\in \mathbb{B}^{n}}f_{\text{penalty}}(\bx)$, and $f_{\text{penalty}}^{\max}=\max_{\bx\in \mathbb{B}^{n}}f_{\text{penalty}}(\bx)$.

This class of algorithms includes QAOA~\cite{Hogg2000,farhi2014quantum, Sanders2020} and its generalization, the quantum alternating operator ansatz algorithm~\cite{Hadfield_2019}. In both algorithms, the parameterized quantum evolution is performed by applying pairs of alternating operators:
\begin{equation}
    \label{eq:ansatz_qaoa}
    \ket{\psi(\boldsymbol{\beta},\boldsymbol{\gamma})} = \prod_{j=1}^{p}\left[U_B(\beta_j)U_C(\gamma_j)\right]\ket{s},
\end{equation}
where $U_C(\gamma_j)=e^{-i\gamma_j C}$ is the phase operator, and $U_B(\beta_j)$ is the mixing operator. In the special case of QAOA, the initial state $\ket{s}$ is the uniform superposition over all computational basis states and the mixing operator $U_B$ is set to be $U_B(\beta_j)=e^{-i\beta_j B}$, where $B=\sum_k\xgate_k$ is a sum of single-qubit Pauli-$\xgate$ operators. In quantum alternating operator ansatz, $U_B$ and $\ket{s}$ are allowed to be arbitrary, and are typically set such that the resulting state $\ket{\psi(\boldsymbol{\beta},\boldsymbol{\gamma})}$ preserves the constraints, in the sense that every measurement outcome $\bm{x}$ belongs to $\F$. In this paper, we consider QAOA with an arbitrary mixing Hamiltonian $B$, defined in Ref.~\cite{Hadfield_2019} as Hamiltonian-based QAOA.  In all other sections of this paper, unless it is specified otherwise, the acronym QAOA is used to denote this version of the algorithm. 

In addition to QAOA, we consider the layer variational quantum eigensolver (L-VQE)~\cite{Liu2022}, which is a version of VQE with the hardware-efficient layered parameterized circuit tailored towards optimization problems. L-VQE uses the parameterized circuit of the form
\begin{equation}\label{eq:ansatz_lvqe}
    \prod_{j=1}^p \big[U_{\text{NN}}(\boldsymbol{\theta}_j)\big]V(\boldsymbol{\theta}_0)\ket{0},
\end{equation}
where $U_{\text{NN}}$ consists of nearest-neighbor $\cnotgate$'s and single-qubit $\rygate$'s, and $V$ is a layer of single-qubit $\rygate$'s. The reader is referred to Ref.~\cite{Liu2022} for the precise definition of the circuit. While the circuit includes non-parameterized $\cnotgate$'s, it is easy to write it equivalently in the form of Equation~\eqref{eq:ansatz_vqe} by pushing $\rygate$ through the control of the $\cnotgate$
and noting that $\rygate(\theta) = e^{-i\frac{\theta}{2}\ygate}$ and $\cnotgate_{1,2}\rygate_2(\theta)\cnotgate_{1,2} = e^{-i\frac{\theta}{2}\zgate_1\ygate_2}$. Here, $\ygate_j$ and $\zgate_j$ denote a single-qubit Pauli-$\ygate$ and Pauli-$\zgate$, respectively,  acting on the $j$-th qubit.

\subsubsection{Quantum Zeno dynamics}%
The quantum Zeno effect (QZE)~\cite{zenosparadox, presilla1996measurement} is named after Zeno's paradox~\cite{sep-paradox-zeno}, which regards the continuous observation of a moving arrow. Zeno's paradox states that an arrow cannot move if no time has elapsed since the point it was last observed. If the time difference between observations is $\Delta t$, continuous observation occurs in the limit of $\Delta t \xrightarrow{} 0$. Under continuous observation, no time elapses between observations, and during each observation the arrow is not moving; thus, no overall movement is possible. The analog in quantum mechanics is a consequence of the Schr\"{o}dinger equation. We first introduce a simpler one-dimensional version, in which the quantum state is restricted from evolving due to repeated measurements, and then present a more general case in which the dynamics of the system are restricted to a particular  subspace, called a Zeno subspace.

Suppose a time-dependent quantum state is evolved in a finite-dimensional Hilbert space $\calH$ from some initial state {$\ket{\psi_{0}}$} under the action of some Hamiltonian $H$ for time $t$. Define a projective measurement $\calP$ given by a pair of complement projections $P=\ketbra{\psi_0}{\psi_0}$ and $Q = \idgate-P$, which acts on a density operator $\rho$ as
\[
    \calP \rho = P \rho P + Q \rho Q.
\]
If we carry out $N$ repeated projective measurements $\calP$ at a time interval of $t/N$, then the probability that the system remains in the initial state is
\begin{align*}
    p(t) & = \lVert Pe^{-iHt/N}\ket{\psi_0}\rVert_{2}^{2N} \\
    & = \left[\lvert\bra{\psi_0}e^{-iHt/N}\ket{\psi_0}\rvert^{2}\right]^{N} \\
    & = \left[1-(t/N\tau_Z)^2\right]^{N} + O(N^{-2}) \xrightarrow{ \;N \rightarrow \infty \;}  1,
\end{align*}
where $\tau_Z^{-2} = \bra{\psi_0}H^2\ket{\psi_0} - \bra{\psi_0}H\ket{\psi_0}^2$ is called the Zeno time and quantifies how often the measurements need to be taken. 
As the frequency at which the measurements are performed increases without bound, the probability of remaining in the initial state approaches one.

Quantum Zeno dynamics (QZD) \cite{facchi2000quantum, Facchi_2002, Facchi_2008, Burgarth_2020} considers the more general case where the evolution of the state is constrained to a subspace of dimension greater than one. Thus the projective measurement $\calP$ can contain multiple projections with ranks all greater than one. Specifically, a restatement of Equation~\eqref{eq:results_measurement},
\begin{equation}
    \label{eq:measurement}
    \calP \rho = \sum_{j=1}^k P_j \rho P_j,
\end{equation}
where $\sum_{j=1}^k P_j = \idgate$, and $P_j$ is a projection onto some subspace $\calH_{j} = P_j\calH$ of dimensionality $\Tr(P_j) \ge 1$.
Informally, QZD states that if the evolution starts in $\calH_j$ and the measurement $\calP$ is performed sufficiently often, then the system will remain in $\calH_j$ with high probability. 

Consider an initial state $\rho_0$, after $N$ projective measurements by $\calP$, the state of the system is given by
\begin{equation}
    \rho(t) = \U(t)\rho_0\U(t)^\dagger,
\end{equation} 
where $\U(t) = \left(\mathcal{P} e^{-iHt/N}\right)^N$ and $p(t) = \Tr(P_j\rho(t))$ is the probability of the system remaining in $\calH_j$ after evolving for time $t$.
Note that 
\begin{align}
    \U(t) & = \left(\mathcal{P} e^{-iHt/N}\right)^N \label{eq:zeno_first} \nonumber \\
    & = \left(\mathcal{P}[\idgate-iHt/N + O( N^{-2})] \right)^N \nonumber \\
    & = \left(\idgate-i\mathcal{P}H t/N + O( N^{-2})\right)^N \nonumber\\
    &=\left(\idgate-i\mathcal{P}H t/N\right)^N + O( N^{-1}) \\
    & \xrightarrow{\; N \rightarrow \infty \;} e^{-i\mathcal{P}Ht}\mathcal{P},  \label{eq:zeno_last}
\end{align}
and the dynamics of the system are governed by $H_Z = \calP H$, called the Zeno Hamiltonian. Moreover, as $N\rightarrow\infty$, transitions between different subspaces $\{\calH_1, \dots, \calH_k\}$ of $\calH$ are suppressed. This implies if $\rho_{0} = P_j\rho_0P_j$ for some $j \in [k]:=\{1, \dots, k\}$, then in the limit of $N \xrightarrow[]{} \infty$, called the Zeno limit, it follows that $p(t)\rightarrow 1$, and thus the state will remain in $\calH_{j}$ throughout the evolution.
For a more detailed discussion the reader is referred to Refs.~\cite{Facchi_2008, Burgarth_2020}.

QZE has many applications in algorithms and error mitigation. Childs et al. \cite{childs2002} propose a version of Grover's search based on QZD that utilizes frequent measurements instead of slow adiabatic evolution. This alternative approach to slow evolution was also observed in Ref.~\cite{aharonov2003}. Somma et al. \cite{somma2007, Somma_2008} develop a quantum-enhanced version of the simulated annealing algorithm. Their approach makes use of QZD to ensure that the evolution remains in the instantaneous quantum Gibbs state for varying temperature. Boixo et al. \cite{Boixo2009} show that for Grover's algorithm and simulated annealing based on QZD, one could use frequent randomized evolutions instead of measurements (the randomization method). The randomization method has also been used to implement algorithms for quantum linear systems \cite{Suba2019QuantumAF, Lin2020optimalpolynomial}. Finally, dynamical decoupling, also called bang-bang decoupling~\cite{Viola1998}, is a popular error-mitigation technique that uses QZE to suppress decoherence \cite{Facchi_2004, Burgarth_2020, Halimeh2021, Halimeh2021_2, Halimeh2022, Halimeh2022_2}.

\subsection{Proof of Theorem 1} \label{sec:proof_of_main_thm}
In this Section we derive our main result, Theorem \ref{thm:main_theorem}, for the number of measurements required to maintain a constant success probability. We start by deriving the required lemmas. 

\begin{manuallemma}{1}
\label{lem:scaling_lemma}
Let $H$ be a Hermitian  matrix. Then 
\begin{align*}
    \min_{P, \ket{\psi} \in \Im(P)} & \norm{P e^{-i \theta H} \ket{\psi}}_{2}^2 = \cos^2\left(\frac{\xi_{\max} - \xi_{\min}}{2} \theta \right) \\ 
    &\forall \theta \in \mathbb{R}, \abs{\theta} \le \frac{\pi}{\xi_{\max} - \xi_{\min}},
\end{align*}
where $P$ is an orthogonal projector and $\xi_{\max}$ and $\xi_{\min}$ are the largest and smallest eigenvalues of $H$.
\end{manuallemma}

\begin{proof}
Suppose $H$ has the following eigendecomposition
\[
    H = \sum_{k=1}^d \xi_k Q_k,
\]
where $\xi_k$ are the unique eigenvalues of $H$ (including $0$ if $H$ is not full rank) and $\{Q_k\}_{k=1}^d$ is the complete set of projectors onto the corresponding eigenspaces. Therefore
\begin{align}
\label{eqn:one_step_in_constraint_tight_proof}
p(\theta) =& \norm{P e^{-i \theta H} \ket{\psi}}_{2}^2 \nonumber \\
    \ge& \norm{\ketbra{\psi}{\psi} e^{-i \theta H} \ket{\psi}}_{2}^2 \nonumber \\
    =& \abs{\bra{\psi}e^{-i \theta H}\ket{\psi}}^2 \nonumber \\
    =& \sum_{j,k=1}^d e^{i \theta (\xi_j - \xi_k)} \bra{\psi}Q_j\ket{\psi} \bra{\psi}Q_k\ket{\psi} \nonumber \\
    =& \sum_{j,k=1}^d \cos(\theta (\xi_j - \xi_k)) \bra{\psi}Q_j\ket{\psi} \bra{\psi}Q_k\ket{\psi}  \nonumber \\
    =& \sum_{j,k=1}^d c_{jk} x_j x_k,
\end{align}
where $c_{jk} = \cos(\theta (\xi_j - \xi_k))$,  $x_j = \bra{\psi}Q_j\ket{\psi} \geq 0$. Note that the second to the last equality follows from 
\begin{align*}
e^{i \theta (\xi_j - \xi_k)}x_{j}x_{k} + e^{i \theta (\xi_k - \xi_j)}x_{k}x_{j} %
= \cos(\theta(\xi_j - \xi_k))x_{j}x_{k} + \cos(\theta(\xi_k - \xi_j))x_{k}x_{j}.
\end{align*}Let $C$ be the matrix with elements $c_{jk}$ at the $j$-th row and $k$-th column. Then using simple trigonometric identities, it can be shown that
\begin{align}
    C &= \bm{v}(\theta)\bm{v}(\theta)^{\mathsf{T}} + \bm{v}\left(\frac{\pi}{2} - \theta\right)\bm{v}\left(\frac{\pi}{2} - \theta\right)^{\mathsf{T}}
\end{align}
where 
\begin{align}
    &\bm{v}(\theta) = ( \cos(\xi_1\theta), \dots, \cos(\xi_d\theta))^{\mathsf{T}}.
\end{align}
Since $C$ is the sum of positive semi-definite matrices, it too is positive semi-definite. 

Therefore, minimizing $p(\theta)$ is equivalent to solving the following convex constrained minimization problem
\begin{equation}
\label{eq:minimization_term}
    \min_{\boldsymbol{x} \in \mathcal{S}} \boldsymbol{x}^{\mathsf{T}} C \boldsymbol{x}, \text{where} \ \mathcal{S} := \{ \bm{x} \in \mathbb{R}_{+}^{d}~|~\lVert \bm{x} \rVert_1 = 1\},
\end{equation}
$\boldsymbol{x} = (x_1, \ldots, x_d)^{\mathsf{T}}$ and thus a sufficient condition \cite[Theorem 2.2.5]{nesterov2003introductory} for $\bm{x}^{\star}$ to be the optimum is

\begin{equation}
\label{eq:optimality_condition}
    {\bm{x}^\star}^{\mathsf{T}}C(\bm{x} - \bm{x}^{\star}) \geq 0, ~ \forall \bm{x} \in \mathcal{S}
\end{equation} 
Consider the following trial solution
\begin{align}
\label{eq:kkt_trial_solution}
    &x^\star_{\min} = x^\star_{\max} = \frac{1}{2}, \nonumber \\
    &x^\star_j = 0 \quad \forall \ j \not \in \{\min, \max\}.
\end{align}
We have that $\forall \bm{x} \in \mathcal{S}$
\begin{align*}
& 2{\bm{x}^\star}^{\mathsf{T}}C(\bm{x} - \bm{x}^{\star}) \\
&= (1+c_{\max,\min})(x_{\max} +x_{\min} - 1)
+ \sum_{j \notin \{\min, \max\}}x_j(c_{\max, j} +c_{\min, j}) \\
&= (1 - x_{\max} - x_{\min})\Bigg[\sum_{j \notin \{\min, \max\}}(c_{\max, j} +c_{ \min, j}) - (1+c_{\max, \min})\Bigg]
\end{align*}
Also for $\abs{\theta} \le \frac{\pi}{\xi_{\max} - \xi_{\min}}$, we have $c_{j,k} \geq c_{\max, \min}$, and thus
\begin{align*}
    &1 + c_{\min,\max} = 2 \cos^2\left(\frac{\xi_{\max} - \xi_{\min}}{2} \theta \right) \\
    \le& 2 \cos\left(\frac{\xi_{\max} - \xi_{\min}}{2} \theta \right) \cos\left(\frac{\xi_{\max} + \xi_{\min} - 2\xi_{j}}{2} \theta \right)\\
    =& c_{\max, j} + c_{\min, j}.
\end{align*}
Combining the above results, we obtain
that $2{\bm{x}^\star}^{\mathsf{T}}C(\bm{x} - \bm{x}^{\star})\geq 0$. Thus our choice is optimal.

After, plugging in the optimal choice and noting that all steps are equalities in \eqref{eqn:one_step_in_constraint_tight_proof} when $P=\ketbra{\psi}$, we obtain:
\begin{align*}
    \min_{P, \ket{\psi} \in \Im(P)} \norm{P e^{-i \theta H} \ket{\psi}}_{2}^2 
    = {\boldsymbol{x}^\star}^{\mathsf{T}} C \boldsymbol{x}^\star
    = \cos^2\left(\frac{\xi_{\max} - \xi_{\min}}{2} \theta \right).
\end{align*}
Additionally, the result implies that minimization occurs when 
\begin{align}
\label{eqn:worst_case_state}
\ket{\psi} = \ket{\pm_{H}}:=\frac{1}{\sqrt{2}}\ket{\xi_{\max}} \pm\frac{1}{\sqrt{2}}\ket{\xi_{\min}}
\end{align} for any $\ket{\xi_{\max}} \in \Im(Q_{\max})$ and $\ket{\xi_{\min}} \in \Im(Q_{\min})$. 
\end{proof}

Note as observed in the proof of Lemma~\ref{lem:scaling_lemma}, the lower bound on the in-constraint probability bound is saturated when the initial state is chosen to be either $\ket{+_{H}}$ or $\ket{-_{H}}$ in Equation \eqref{eqn:worst_case_state}, and $P$ is the projector onto the chosen initial state.

\begin{manuallemma}{2}
\label{cor:prob_lower_bound_tight}
Let $H$ be a Hermitian matrix. Then
\begin{align*}
    \min_{P, \ket{\psi} \in \Im(P)} &\norm{P \left(\calP e^{-i \frac{\theta}{N} H} \right)^{N} \ket{\psi}}_{2}^{2}
    = \frac{1}{2} + \frac{1}{2} \left[2 \, p^*\left(\frac{\theta}{N}\right)-1\right]^{N}, \\
    &\forall \theta \in \mathbb{R}, \ \abs{\theta} \le \frac{\pi N}{\xi_{\max} - \xi_{\min}},
\end{align*}
where $\calP$ is a projective measurement as defined in Equation~\eqref{eq:measurement} with projectors $P$ and $\idgate - P$,
\[
    p^*\left(\frac{\theta}{N}\right) = \cos^{2}\left(\frac{\xi_{\max} - \xi_{\min}}{2N} \theta \right),
\]
and $\xi_{\max}$ and $\xi_{\min}$ are the largest and smallest eigenvalues of $H$.
\end{manuallemma}
\begin{proof}
Consider a fixed $\theta$ and some $N$ that satisfies the hypothesis. The stochastic process formed by random variables indicating whether the system is in $\Im(P)$ or its complement after each evolution segment $\mathcal{P}e^{-i\frac{\theta}{N}H}$ form a two-state Markov chain. According to Lemma \ref{lem:scaling_lemma}, the probability of remaining in a state on the chain at any point in time is at least 
\begin{equation}
    \label{eqn:cos_prob}
    p^*\left(\frac{\theta}{N}\right) := \cos^{2}\left(\frac{\xi_{\max} - \xi_{\min}}{2N} \theta \right),
\end{equation}
and this minimum probability is attained at each segment when $\ket{\psi}$ is \eqref{eqn:worst_case_state} and $P=\ketbra{\psi}$. Because, in this case, the evolution lies in the two-dimensional space spanned by $\ket{\pm_{H}}$, the result is a Markov chain with transition matrix
\begin{align}
A(k) = \bar{A} = \begin{pmatrix}
p^* & 1 - p^* \\
1 - p^* & p^*
\end{pmatrix}, \forall k \in [N],
\end{align}
and $\forall k > N, A(k) = \idgate$. 

Therefore the probability of the state remaining in $\Im(P)$ after $N$ steps of the chain is $\bar{A}^{N}_{0,0}$, or the first diagonal element of the matrix $\bar{A}$ after raising it to the $N$-th power. Applying diagonalization on $\bar{A}$, we obtain
\begin{align}
    \bar{A}^{N}_{0,0} = \frac{1 + (2p^*-1)^{N}}{2}.
\end{align}
\end{proof}

We now proceed to derive Theorem \ref{thm:main_theorem} using the above lemmas.

\begin{proof}[Proof of Theorem \ref{thm:main_theorem}]
 For all $ \theta \in \mathbb{R}$, such that
 \begin{align}
    \label{eqn:theta_lower_bound}
     \abs{\theta} < \frac{N}{\xi_{\max} - \xi_{\min}},
\end{align}
it follows that
\begin{align}
    \cos^{2}\left(\frac{\xi_{\max} - \xi_{\min}}{2N}\theta\right)
    \geq &\left(1 - \frac{1}{2}\left[\frac{\theta(\xi_{\max} - \xi_{\min})}{2N}\right]^2\right)^{2}\nonumber\\
    \geq &1 - \frac{\left[\theta(\xi_{\max} - \xi_{\min})\right]^2}{4N^2}\nonumber.
\end{align}
If we combine this result with  Lemma~\ref{cor:prob_lower_bound_tight}, then we obtain
\begin{align}
    \frac{1}{2} + \frac{1}{2} \left[2 \, p^*\left(\frac{\theta}{N}\right)-1\right]^{N}\nonumber
    &\geq \frac{1}{2} + \frac{1}{2} \left(1 -\frac{\left[\theta(\xi_{\max} - \xi_{\min})\right]^2}{2N^2}\right)^{N}\nonumber\\
    &\geq\frac{1}{2} + \frac{1}{2}\exp(-\frac{\left[\theta(\xi_{\max} - \xi_{\min})\right]^2}{2N})
\end{align}
To lower bound this by ${1 - \delta}$, we can choose $N$ as stated in Theorem \ref{thm:main_theorem}. Note that to ensure Equation~\eqref{eqn:theta_lower_bound} we must have
\begin{align}
    \frac{\left[\theta (\xi_{\max} - \xi_{\min})\right]^2}{N} < N,
\end{align}
and thus
\begin{align}
    \frac{1}{2} + \frac{1}{2}\exp(-\frac{\left[\theta (\xi_{\max} - \xi_{\min})\right]^2}{2N})
    > \frac{1}{2} + \frac{1}{2}\exp(-\frac{N}{2}).
\end{align} At the minimum of value of $N$, we have
\begin{align}
\frac{1}{2} + \frac{1}{2}\exp(-\frac{1}{2}) \lesssim 0.81.
\end{align}

\end{proof}

\subsection{Proof of Corollary 1}
\label{sec:proof_of_cor1}
\begin{proof} {For simplicity, consider a single block of size $m$:
\begin{align}
    \U_Z(\bm{\theta}) = \left[\prod_{j=1}^{m}e^{-i(\theta_{j}/N)H_j}\right]^{N}.
\end{align}
First, suppose that the elements of $\{H_j\}_{j=1}^{m}$ do not all pairwise commute. Then, according to \cite[Proposition~9]{childs2021theory}:
\begin{align}
    \norm{ \prod_{j=1}^{m}e^{-i(\theta_{j}/N)H_j} - e^{-i\sum_{j=1}^{m}(\theta_{j}/N)H_{j}} }_{2} \leq \frac{1}{2N^2}\sum_{j=1}^{m} \norm{ \left[ \sum_{j'=j+1}^{m}\theta_{j'}H_{j'}, \theta_{j}H_j \right] }_{2}
\end{align}
This implies that
\begin{align}
    \norm{ \U_Z(\bm{\theta}) - \left[\mathcal{P}e^{-i\sum_{j=1}^{m}(\theta_j/N)H_j}\right]^{N} }_{2}
    &\leq \frac{1}{2N}\sum_{j=1}^{m} \norm{ \left[ \sum_{j'=j+1}^{m}\theta_{j'}H_{j'}, \theta_{j}H_j \right] }_{2}\nonumber\\
    &\leq \frac{\left[\sum_{j=1}^{m}\lvert\theta_j\rvert\right]^2\max_{j} \norm{ H_{j} }_{2}^2}{N}.
\end{align}
Then
\begin{align}
    \norm{ P_{\mathcal{G}}\U_Z(\bm{\theta})\ket{\psi}}_{2}^{2} \leq 
    \left( \norm{ P_{\mathcal{G}} \left[\mathcal{P}e^{-i\sum_{j=1}^{m}(\theta_j/N)H_j} \right]^{N}\ket{\psi}}_{2}
     + \frac{\left[\sum_{j=1}^{m}\lvert\theta_j\rvert\right]^2\max_{j}\norm{ H_{j} }_{2}^2}{N}\right)^2.
\end{align}
If we choose 
\begin{align}
N = \left\lceil\frac{4\left[\sum_{j=1}^{m}\lvert\theta_j\rvert\right]^2\max_{j}\norm{ H_{j}}_{2}^{2}}{\ln{\left(1-\delta\right)^{-2\alpha}}}\right\rceil,
\end{align} 
then for $\alpha \leq 1$, Theorem \ref{thm:main_theorem} with Remark \ref{remark_1} implies that the out-of-constraint probability is at most 
\begin{align}
    \norm{ P_{\mathcal{G}}\U_Z(\bm{\theta})\ket{\psi}}_{2}^{2} 
    &\leq \frac{\delta}{2} + \alpha\frac{\sqrt{\delta}}{2}\ln{\left(1-\delta\right)^{-2}} + \frac{\alpha^2}{16}\ln^{2}{\left(1-\delta\right)}^{-2}\\
    &\leq \frac{\delta}{2} + \frac{\delta}{2}\left[\alpha + \frac{\alpha^2}{8}\right],
\end{align}
where $\delta \leq 0.19$. If $\alpha = 0.89$, then
\begin{align}
    \norm{ P_{\mathcal{G}}\U_Z(\bm{\theta})\ket{\psi}}_{2}^{2} < \delta.
\end{align}
}
{To compensate for the decay of the success probability after $L$ blocks, each $N_k$ must be multiplied by $L$.}

Lastly, for the asymptotic dynamics, from Equation~\eqref{eq:zeno_first}-\eqref{eq:zeno_last} we get
\begin{align}
    \U_Z(\bm{\theta}) 
    & = \left[\calP \prod_{j=1}^m\left(\idgate -i(\theta_j/N)H_j + O(N^{-2})\right)\right]^N \nonumber\\ 
    & = \left[\calP \left(\idgate -i\sum_{j=1}^m (\theta_j/N)H_j + O(N^{-2})\right)\right]^N \\
    & \xrightarrow{\; N \rightarrow \infty \;} e^{-i \sum_{j=1}^m\calP H_j\theta_j}\calP = e^{-i \calP\bm{H}\cdot\bm{\theta}}\calP.
\end{align}
Thereby the dynamics are described by the Zeno Hamiltonian $\bm{H_Z} = \calP\bm{H}$, where $\calP$ acts element-wise on the vector $\bm{H} = (H_1, \dots , H_m)^{\mathsf{T}}$. The limiting dynamics of $L$ blocks is the product of these limits.

If the elements of $\{H_j\}_{j=1}^{m}$ pairwise commute, then there is no Trotter error, and $\alpha=1$ without the need to halve $\delta$. The limiting dynamics follows trivially as well.
\end{proof}

\subsection{Realizing oracles for combinatorial constraints} \label{sec:implement_oracle}
In this Section, we review the constructions of quantum oracles for implementing polynomial inequality and equality constraints. We use the constructions provided in this Section in the experiments on a trapped-ion quantum computer described in \alt{the \emph{Results} Section}{Section~\ref{sec:hardware_experiments}}. Since any function on the Boolean cube can be expressed as a polynomial it suffices to only demonstrate constructions for polynomial constraints \cite{odonnell_2014}. In addition, since we are considering problems in $\mathsf{NPO}$ we can assume the existence of a polyomially-sized classical circuit for evaluating any constraints to sufficient precision. Given that all classical basis gates can be represented as polynomials, we can represent our constraint as the composition of polynomially many polynomial functions. Of course, one could also directly implement the classical circuit in a reversible fashion on a quantum device efficiently. For the remainder of this Section, we consider a polynomial function $g$:
\begin{equation}
\label{eqn:binary_poly}
g(\bm{b}) = \sum_{k=1}^{K} d_k\prod_{l \in S_k}b_l,
\end{equation}
where $S_k \subseteq [n]$ and $d_k \in \mathbb{R}$. In addition for $S_k = \varnothing$, $\prod_{l \in S_k}b_l := 1$. 

Without loss of generality we can assume that equality constraints are of the form $g(\bm{b}) = 0$ and inequality constraints are of the form $g(\bm{b}) \geq 0$. We assume that there exists an oracle that computes the value of $g(\bm{b})$ into a quantum register (constructions of such oracles are briefly reviewed in \alt{the \emph{Methods} Section}{Sections~\ref{sec:classical_reversible_arithmetic}~and~\ref{sec:quantum_fourier_arithmetic}}). For an equality constraint, we implement the constraint-enforcing measurement by simply measuring the entire register. A projection onto the in-constraint subspace implies that we have observed a $0$ in the register. For an inequality constraint, we measure the qubit corresponding to the sign, a $0$ corresponds to a successful projection, and apply the inverse of the oracle post measurement. 

While the above procedure works in general, there are further optimizations that can be made by utilizing quantum conditional logic (QCL). We give an example of such an optimization in \alt{the \emph{Results} Section}{Section~\ref{sec:hardware_experiments}}. Further optimizations are possible for double-sided inequalities of the form $0\leq g(\bm{b}) < a$, where $a$ is a power of $2$. To implement the measurement corresponding to this double-sided inequality, we only need to measure higher-order bits. Since the results of these high-order bits are now classical, we can replace the part of the inverse-oracle circuit controlled on these bits with classically-conditioned single-qubit gates. Lastly, because all constraint-preserving measurements can be implemented separately and thus auxiliary qubits can be reused, the required number of auxiliary qubits to implement all constraint-preserving measurements is equal to the maximum amount of auxiliary qubits required by any oracle call.

In the subsections that follow, we present efficient constructions of oracles that can be used to implement polynomial functions. Both of these use techniques that have been presented in prior work. Here we include a brief review for completeness and present the resource analysis for our setting. 

\subsubsection{Review of classical reversible arithmetic circuits} \label{sec:classical_reversible_arithmetic}
The design of reversible versions of classical arithmetic circuits has been extensively explored and highly optimized constructions are available~\cite{haner2016, Hner2018OptimizingQC, Haner_2018_floating}. Such constructions allow one to implement unitary operations for performing arithmetic on quantum registers.
Consider fixed-point arithmetic of $m$ bits including digits both before and after the decimal point. Suppose polynomial $g$ has $K$ terms. For each coefficient $d_k$, we require an $n$-qubit controlled $m$-bit adder. A controlled $m$-bit adder can be implemented with $O(m)$ $\tgate$ gates \cite{Gidney_2018}. Since a multi-controlled Toffoli can be implemented with a $\tgate$ of $O(n)$ \cite{Jones_2013, maslov2016} and thus the overall multi-controlled adder can be implemented with a $\tgate$ count of $O(n+m)$. The $\tgate$ count for implementing $g$ is $O(K(n+m))$.

\subsubsection{Review of quantum Fourier arithmetic} \label{sec:quantum_fourier_arithmetic}
For smaller quantum devices, a more resource efficient approach is to switch to the Fourier basis using the quantum Fourier transform (QFT) and perform the arithmetic in the Fourier basis. This approach has worse asymptotic complexity in terms of $\tgate$-gate counts, but requires fewer qubits and \cnotgate~gates. We use this approach in the hardware experiments discussed in \alt{the \emph{Results} Section}{Section~\ref{sec:hardware_experiments}}. The discussion in this Section is based on Ref.~\cite{Gilliam_2021}, though the idea of using the QFT for quantum arithmetic is well-known, see e.g. \cite{draper2000, Ruiz_Perez_2017, ahin_2020}. 

For $s \in [2^m]$, the QFT on $\mathbb{Z}_{2^m}$  is defined as follows:
\begin{equation}
\label{eqn:qft}
    \text{QFT}_{2^m}: \ket{s} \mapsto \sum_{k \in [2^m]}e^{-i2\pi k s/{2^m}}\ket{k}.
\end{equation}
It can be shown \cite{nielsen2010quantum} that the right-hand side of \eqref{eqn:qft} is a product state and can be expressed in the following form:
\begin{equation}
     \bigotimes_{k=1}^{m} \frac{\ket{0} + e^{-i\pi \frac{ s}{2^{m-k}}}\ket{1}}{\sqrt{2}} = F_m\left(\frac{s}{2^m}\right)\ket{+}^{\otimes m},
\end{equation}
where
\begin{equation}
    F_m(\theta) :=\bigotimes_{k=1}^{m} R(\pi 2^k\theta)
\end{equation}
implements the desired operation. In addition, $R(\alpha)$ denotes the phase gate $\ketbra{0} + e^{i\alpha}\ketbra{1}$. The angle $\theta$ is restricted to $[-\frac{1}{2}, \frac{1}{2})$ to avoid overflow and allow for representing negative numbers. Thus, when implementing a polynomial $g$, we require that its range match the range of $\theta$, i.e., $\lVert g\rVert_{\infty} \leq \frac{1}{2}$. This can always be satisfied by scaling $g$ accordingly.

As an example, we can add two integers $a$ and $b$, with the conditions $a, b, a + b \in \{-2^{m-1},\dots 0, \dots, 2^{m-1}-1\}$, as follows:
\begin{equation}
    \text{QFT}^{\dagger}_{2^m}F_m\left(\frac{a}{2^m}\right)F_m\left(\frac{b}{2^m}\right)\ket{+}^{\otimes m} = \ket{a + b}.
\end{equation}
Note, the value in the quantum register is really the two's complement of $a + b$.
We define the following controlled operation:
\begin{equation}
\label{eqn:control_op}
    F_{m}(\bm{b}, \theta) := \ketbra{\bm{b}}\otimes F_m(\theta) + (I - \ketbra{\bm{b}})\otimes I,
\end{equation}
where $\bm{b} \in \mathbb{B}^n$. For $S_k \subseteq [n]$, let $\bm{1}_{S_k} \in \mathbb{B}^n$ denote the indicator vector of $S_k$. 
The process for (approximately) loading the value of the polynomial \eqref{eqn:binary_poly} into a quantum register is:
\begin{equation}
\label{eqn:poly_op}
    (I \otimes \text{QFT}^{\dagger}_{2^m})\prod_{k=1}^{K}F_m(\bm{1}_{S_k}, d_k)\ket{\bm{b}}\ket{+}^{\otimes m} = \ket{\bm{b}}\ket{\tilde{g}(\bm{b})},
\end{equation}
where by the assumption on the range of $g$, $|\tilde{g}(\bm{b}) - g(\bm{b})| \leq 2^{-m}$. The result is stored in an auxiliary quantum register of size $O(m)$. The operation $F_m(\bm{b}, \theta)$ requires $m$ $n$-controlled rotation gates. Thus overall it requires $Km$ $n$-controlled rotation gates. An $O(n)$-controlled Toffoli can be implemented with $O(n)$ $\tgate$ gates ~\cite{Jones_2013, maslov2016} and  each controlled rotation can be $\epsilon$-approximately implemented with $O(\log(1/\epsilon))$ $\tgate$'s~\cite{Bocharov_2015, Nam_2020}. Thus, assuming a fixed rotation-gate approximation error the total cost is $O(Kmn)$.

The operation $\text{QFT}_{2^{m}}$ requires $O(m^2)$ gates to be implemented exactly \cite{nielsen2010quantum} and can be implemented approximately, for a fixed approximation error, on a fault-tolerant device with $O(m\log(m))$ $\tgate$ gates \cite{Nam_2020}. For equality constraints, since we will be measuring the entire register containing the value $\tilde{g}(\bm{b})$,  we swap the coherent implementation of the inverse QFT for the semiclassical variant \cite{griffiths1996semiclassical, Parker_2000}. This semiclassical version of the QFT replaces all two-qubit gates with classically-controlled single qubit gates and requires only a single auxiliary qubit that is repeatedly measured and reset to compute the bits of $\tilde{g}(\bm{b})$. Thus, this approach benefits from both mid-circuit measurements and QCL. A fault-tolerant version of this circuit can be approximately implemented with $O(m\log(m))$ $\tgate$ gates \cite{Goto2014ResourceRF}. Thus in a fault-tolerant setting the overall $\tgate$ count of the QFT-based approach is $O(Kmn + m\log(m))$.

\subsection{Initial state construction}
Our proposed approach is flexible with regards to the choice of the initial state, any initial state that is in-constraint suffices. Thus, unlike Ref.~\cite{bartschi2020grover}, when using the complete-graph mixer our approach does not require repeated applications of a unitary and its inverse for preparing the uniform superposition of in-constraint states. However, the initial state we use in experiments discussed in \alt{the \emph{Results} Section}{Sections~\ref{sec:numerical_experiments}~and~\ref{sec:hardware_experiments}} is the uniform superposition over all computational basis states encoding in-constraint solutions. In general, this superposition is hard to prepare. However, there exist constructions for a wide range of practically relevant cases. If the set of feasible solutions is efficiently indexable, Ref.~\cite[Section IIIB]{Marsh_2020} gives an efficient procedure for the initial state preparation. In the specific case of a Hamming-weight equality or inequality constraint, the uniform superposition over feasible states is a superposition of Dicke states with corresponding Hamming weights, which can be constructed efficiently~\cite{bartschi2022short}. Since, our technique does not require the state preparation method be reversible, we can make use of repeat-until-success schemes.

{\subsection{Parameter optimization}}

{The Zeno framework we propose works well with standard techniques used to optimize parameterized quantum circuits. Specifically, as long as each $N_{r}$ is large enough to ensure the desired minimum in-constraint probability is $1-\delta$ (c.f. Corollary~\ref{cor:scheme_2_cor}) for the given parameter range, the direction of steepest descent will still result in a circuit with the same minimum in-constraint probability. Here we make an assumption that $\bm\theta$ remains bounded throughout optimization, which is a valid assumption in practice. This means that both gradient-based and gradient-free local optimization methods can be used with Zeno-augmented hybrid quantum-classical algorithms. A commonly used way to optimize parameterized quantum circuits is to use the parameter-shift rule~\cite{schuld2019evaluating,Wierichs2022generalparameter} in conjunction with a gradient-based optimizer. We now show that the Zeno framework works efficiently with the parameter-shift rule}.

{We consider the task of finding a minimum-eigenvalue state of an observable $M$ using a parameterized quantum evolution consisting of generating Hamiltonians that are also unitary, e.g. L-VQE. We utilize the measurement scheme presented in Equation~\eqref{eqn:ansatz_vqe_with_meas} with the condition that $\forall k, m_k=1$. Following similar arguments as \cite[Section 3]{schuld2019evaluating}, we obtain} 
{\begin{align}
    \frac{\partial}{\partial{\theta_{r}}}\Tr\left\{M\U_Z(\bm{\theta})\rho\U_Z^{\dagger}(\bm{\theta})\right\} 
    &= \sum_{k=1}^{N_{r}}\Tr\left\{M_{k}\mathcal{P}\frac{H_{r}}{N_{r}}e^{-i\frac{\theta_{r}}{N_{r}} H_{r}}\rho_{k} + \text{h.c.}\right\}\nonumber\\
    &=\frac{1}{N_{r}}\sum_{k=1}^{N_{r}}\bigg[\Tr\left\{M\U_Z^{+(r,k)}\rho\U_Z^{\dagger,+(r,k)}\right\} - \Tr\left\{M\U_Z^{-(r,k)}\rho\U_Z^{\dagger,-(r,k)}\right\}\bigg],
\end{align}
where $M_k$ and $\rho_k$ contain terms that have not been differentiated, and  $\U_Z^{\pm(r,k)}$ is the same as $\U_Z(\bm{\theta})$ except that the evolution at the $\sum_{t=1}^{r-1}N_{t} + k$-th step has a phase shift of $\pm\frac{\pi}{4N_r}$. Thus, whereas the normal parameter-shift requires two expectation evaluations per parameter, Zeno would require $2N_{r}$. This is the same additional overhead as in the case of a circuit with gates that share parameters.}

{It also easy to see that the gradient is biased towards minimizing $M_{\mathcal{F}} = P_{\mathcal{F}}MP_{\mathcal{F}}$, i.e. the in-constraint Hamiltonian, as follows:}
{\begin{align}
    \frac{\partial}{\partial{\theta_{r}}}\Tr\left\{M\U_Z(\bm{\theta})\rho\U_Z^{\dagger}(\bm{\theta})\right\} = \mathsf{Pr}_{\mathcal{F}}\frac{\partial}{\partial{\theta_{r}}}\Tr\left\{M_{\mathcal{F}}\frac{\U_Z(\bm{\theta})\rho\U_Z^{\dagger}(\bm{\theta})}{\mathsf{Pr}_{\mathcal{F}}}\right\} + \mathsf{Pr}_{\mathcal{G}}\frac{\partial}{\partial{\theta_{r}}}\Tr\left\{M_{\mathcal{G}}\frac{\U_Z(\bm{\theta})\rho\U_Z^{\dagger}(\bm{\theta})}{\mathsf{Pr}_{\mathcal{G}}}\right\},
\end{align}
where $\mathsf{Pr}_{\mathcal{F}}$ is the probability of projecting onto $\mathcal{F}$ when measuring the parameterized evolution with $\mathcal{P}$. Lastly, Corollary \ref{cor:scheme_2_cor} can be used to ensure $\mathsf{Pr}_{\mathcal{F}} > 1 - \delta$.}

\backmatter

\bmhead{Data Availability}
We make all the data presented in this paper available online at \url{https://doi.org/10.5281/zenodo.7125969}.

\bmhead{Code Availability}
We make the code required to reproduce the figures presented in this paper as well as the code executed on quantum hardware available online at \url{https://doi.org/10.5281/zenodo.7125969}. 

\bmhead{Acknowledgments} The authors wish to thank Antonio Mezzacapo from IBM for his invaluable contributions to this project.  Special thanks also to Tony Uttley, Jenni Strabley and Brian Neyenhuis from Quantinuum for their assistance on the execution of the experiments on the Quantinuum H1-2 trapped-ion quantum processor.

\bmhead{Author Contributions} Marco Pistoia led the overall project. Dylan Herman, Ruslan Shaydulin, Yue Sun and Romina Yalovetzky developed the simulation code and performed numerical experiments. Dylan Herman performed the experiments on trapped-ion quantum processors. Dylan Herman, Ruslan Shaydulin, Yue Sun, Shouvanik Chakrabarti, and Arthur Rattew developed the theoretical results. Shaohan Hu and Piere Minssen contributed to technical discussions. All authors contributed to the writing of the manuscript.

\bmhead{Competing Interests} The authors declare no competing interests.

\bibliography{bibliography}

\section*{Disclaimer}
This paper was prepared for information purposes with contributions from the Global Technology Applied Research center of JPMorgan Chase. This paper is not a product of the Research Department of JPMorgan Chase or its affiliates. Neither JPMorgan Chase nor any of its affiliates make any explicit or implied representation or warranty and none of them accept any liability in connection with this paper, including, but not limited to, the completeness, accuracy, reliability of information contained herein and the potential legal, compliance, tax or accounting effects thereof. This document is not intended as investment research or investment advice, or a recommendation, offer or solicitation for the purchase or sale of any security, financial instrument, financial product or service, or to be used in any way for evaluating the merits of participating in any transaction.

\end{document}